\newtheorem{thm}{Theorem}[section]
\newtheorem{cor}[thm]{Corollary}
\newtheorem{pro}[thm]{Proposition}
\theoremstyle{definition}
\begin{document}
\title{Hamiltonian Cycles in Linear-Convex Supergrid Graphs}

\author{Ruo-Wei Hung\\
\textit{Department of Computer Science and Information Engineering,}\\
\textit{Chaoyang University of Technology,}\\
\textit{Wufeng, Taichung 41349, Taiwan}\\
E-mail: rwhung@cyut.edu.tw}

\maketitle

\begin{abstract}
A supergrid graph is a finite induced subgraph of the infinite graph associated with the two-dimensional supergrid. The supergrid graphs contain grid graphs and triangular grid graphs as subgraphs. The Hamiltonian cycle problem for grid and triangular grid graphs was known to be NP-complete. In the past, we have shown that the Hamiltonian cycle problem for supergrid graphs is also NP-complete. The Hamiltonian cycle problem on supergrid graphs can be applied to control the stitching trace of computerized sewing machines. In this paper, we will study the Hamiltonian cycle property of linear-convex supergrid graphs which form a subclass of supergrid graphs. A connected graph is called $k$-connected if there are $k$ vertex-disjoint paths between every pair of vertices, and is called locally connected if the neighbors of each vertex in it form a connected subgraph. In this paper, we first show that any 2-connected, linear-convex supergrid graph is locally connected. We then prove that any 2-connected, linear-convex supergrid graph contains a Hamiltonian cycle.\\

\noindent\textbf{Keywords:}
Hamiltonian cycle, locally connected, linear-convex supergrid graph, supergrid graph, grid graph, triangular grid graph, computer sewing machine
\end{abstract}

\section{Introduction}

A \textit{Hamiltonian cycle} in a graph is a simple cycle in which each vertex of the graph appears exactly once. The \textit{Hamiltonian cycle } problem involves determining whether or not a graph contains a Hamiltonian cycle. A graph is said to be \textit{Hamiltonian} if it contains a Hamiltonian cycle. The Hamiltonian path problem is defined similarly. They have numerous applications in different areas, including establishing transport routes, production launching, the on-line optimization of flexible manufacturing systems \cite{Ascheuer96}, computing the perceptual boundaries of dot patterns \cite{OCallaghan74}, pattern recognition \cite{Bermond78, PreperataS85, Toussaint80}, and DNA physical mapping \cite{Grebinski98}. It is well known that the Hamiltonian problems are NP-complete for general graphs \cite{GareyJ79, Johnson85}. The same holds true for bipartite graphs \cite{Krishnamoorthy76}, split graphs \cite{Golumbic80}, circle graphs \cite{Damaschke89}, undirected path graphs \cite{BertossiB86}, planar bipartite graphs with maximum degree 3 \cite{Itai82}, grid graphs \cite{Itai82}, triangular grid graphs \cite{Gordon08}, and supergrid graphs \cite{Hung15}.

Let $G$ be a graph with the vertex set $V(G)$ and edge set $E(G)$. Graph $G$ is \textit{connected} if there exists a path between every pair of its vertices, and $G$ is called $k$-connected ($k\geqslant 2$) if there are $k$ vertex-disjoint paths between every pair of vertices in $G$. Obviously, if graph $G$ is not $2$-connected, then $G$ contains no Hamiltonian cycle. In addition, a $2$-connected graph does not imply that it is Hamiltonian. For each vertex $v$ of $G$, the \textit{neighborhood} $N(v)$ of $v$ is the set of all vertices adjacent to $v$. For a subset of vertices $S\subseteq V(G)$, the subgraph of $G$ induced by $S$ is denoted by $G[S]$. A vertex $v$ of $G$ is said to be \textit{locally connected} if $G[N(v)]$ is connected. Graph $G$ is called \textit{locally connected} if each vertex of $G$ is locally connected. The locally connected property of graphs can be applied to VLSI architecture \cite{Gulak88}. Some further survey for local connectivity can be found in \cite{Faudree97}.

The \emph{two-dimensional integer grid} $G^\infty$ is an infinite graph whose vertex set consists of all points of the Euclidean plane with integer coordinates and in which two vertices are adjacent if and only if the (Euclidean) distance between them is equal to 1. The \emph{two-dimensional triangular grid} $T^\infty$ is an infinite graph obtained from $G^\infty$ by adding all edges on the lines traced from up-left to down-right. A fragment of graph $G^\infty$ and $T^\infty$ is depicted in Fig. \ref{Fig_FragmentOfGridRelated}(a) and Fig. \ref{Fig_FragmentOfGridRelated}(b), respectively. A \textit{grid graph} is a finite, vertex-induced subgraph of $G^\infty$. For a node $v$ in the plane with integer coordinates, let $v_x$ and $v_y$ be the $x$ and $y$ \textit{coordinates} of node $v$, respectively, denoted by $v=(v_x, v_y)$. If $v$ is a vertex in a grid graph, then its possible neighbor vertices include $(v_x, v_y-1)$, $(v_x-1, v_y)$, $(v_x+1, v_y)$, and $(v_x, v_y+1)$. For example, Fig. \ref{Fig_ExampleOfGridRelated}(a) shows a grid graph. A \textit{triangular grid graph} is a finite, vertex-induced subgraph of $T^\infty$. If $v$ is a vertex in a triangular grid graph, then its possible neighbor vertices include $(v_x, v_y-1)$, $(v_x-1, v_y)$, $(v_x+1, v_y)$, $(v_x, v_y+1)$, $(v_x-1, v_y-1)$, and $(v_x+1, v_y+1)$. For example, Fig. \ref{Fig_ExampleOfGridRelated}(b) shows a triangular grid graph. Thus, triangular grid graphs contain grid graphs as subgraphs. Note that triangular grid graphs defined above are isomorphic to the original triangular grid graphs studied in the literature \cite{Gordon08} but these graphs are different when considered as geometric graphs. By the same construction of triangular grid graphs from grid graphs, we have proposed a new class of graphs, namely \textit{supergrid graphs}, in \cite{Hung15}. The \emph{two-dimensional supergrid} $S^\infty$ is an infinite graph obtained from $T^\infty$ by adding all edges on the lines traced from up-right to down-left. A \emph{supergrid graph} is a finite, vertex-induced subgraph of $S^\infty$. The possible adjacent vertices of a vertex $v=(v_x, v_y)$ in a supergrid graph include $(v_x, v_y-1)$, $(v_x-1, v_y)$, $(v_x+1, v_y)$, $(v_x, v_y+1)$, $(v_x-1, v_y-1)$, $(v_x+1, v_y+1)$, $(v_x+1, v_y-1)$, and $(v_x-1, v_y+1)$. Then, supergrid graphs contain grid graphs and triangular grid graphs as subgraphs. For example, Fig. \ref{Fig_FragmentOfGridRelated}(c) depicts a fragment of graph $S^\infty$ and Fig. \ref{Fig_ExampleOfGridRelated}(c) shows a supergrid graph. Notice that grid and triangular grid graphs are not subclasses of supergrid graphs, and the converse is also true: these classes of graphs have common elements (points) but in general they are distinct since the edge sets of these graphs are different. Obviously, all grid graphs are bipartite \cite{Itai82} but triangular grid graphs and supergrid graphs are not bipartite. The Hamiltonian cycle problem for grid graphs and triangular grid graphs were known to be NP-complete \cite{Gordon08, Itai82}. Recently, we have shown that the Hamiltonian cycle problem on supergrid graphs is also NP-complete \cite{Hung15}.

\begin{figure}[t]
\begin{center}
\includegraphics[width=0.95\textwidth]{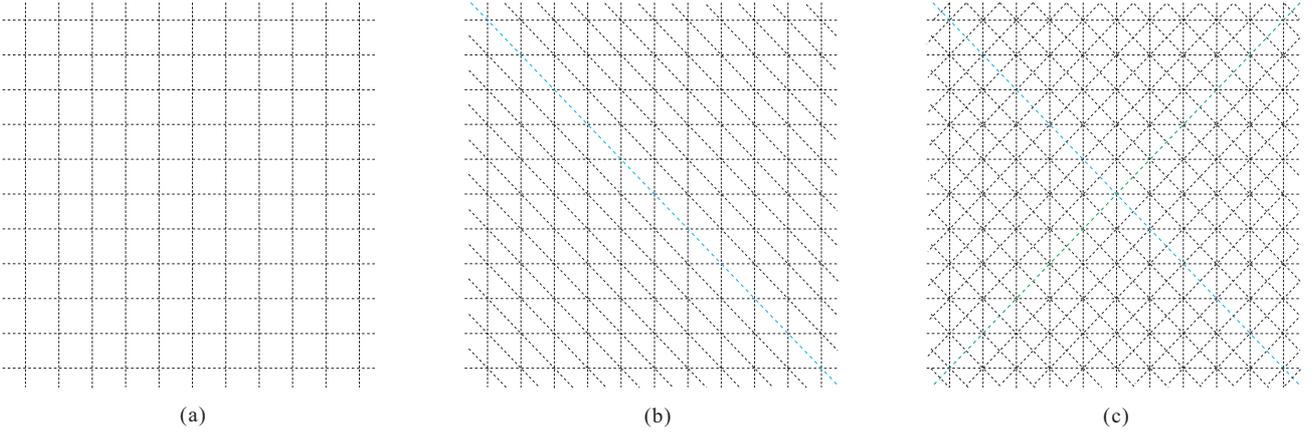}
\caption{A fragment of infinite graph (a) $G^\infty$, (b) $T^\infty$, and (c) $S^\infty$.} \label{Fig_FragmentOfGridRelated}
\end{center}
\end{figure}

\begin{figure}[t]
\begin{center}
\includegraphics[width=0.95\textwidth]{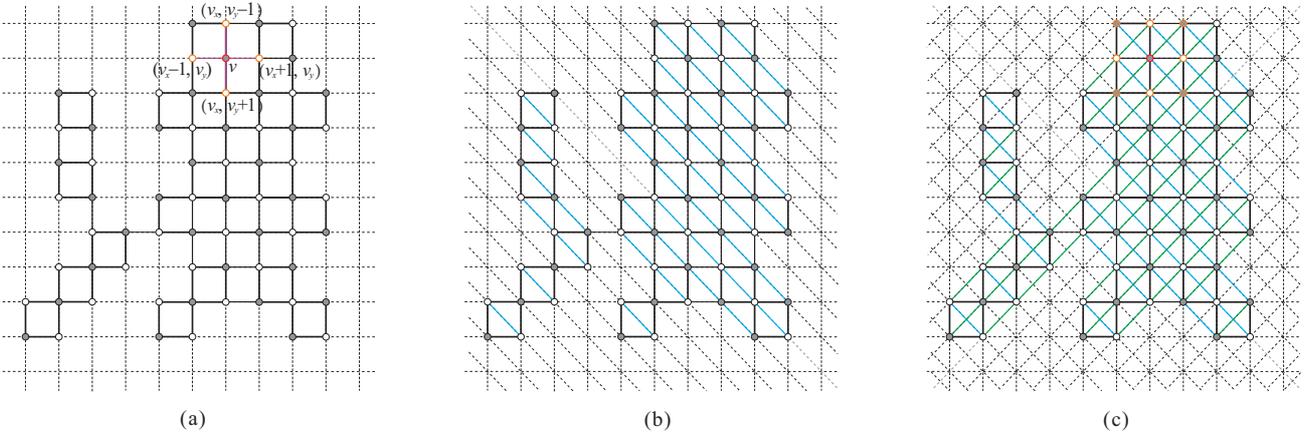}
\caption{(a) A grid graph, (b) a triangular grid graph, and (c) a supergrid graph, where circles represent the vertices and solid lines indicate the edges in the graphs.} \label{Fig_ExampleOfGridRelated}
\end{center}
\end{figure}

The Hamiltonian cycle problem on supergrid graphs can be applied to control the stitching trace of a computerized sewing machine as follows \cite{Hung15}. Consider a computerized sewing machine given an image. The computerized sewing software is used to compute the sewing traces of a computerized sewing machine. A computerized sewing software may include two parts. The first part is to do image processing for the input image, eg., reduce order of colors and image thinning. It then produces some sets of lattices in which every set of lattices represents a color in the input image for sewing. The second part is given by a set of lattices and then computes a cycle (path) to visit the lattices of the set such that each lattice is visited exactly once. Finally, the software transmits the stitching trace of the computed cycle (path) to the computerized sewing machine, and the machine then performs the sewing work along the trace on the object, e.g. clothes. For example, given an image in Fig. \ref{Fig_ComputerizedSewing}(a), the software first analyzes the image and then produces ten colors of regions in which each region is filled with the same color and may be formed by some blocks, as shown in Fig. \ref{Fig_ComputerizedSewing}(b). It then produces ten sets of lattices in which every set of lattices represents a region, where each region is filled by a sewing trace with the same color and it may be partitioned into many non-contiguous blocks. Fig. \ref{Fig_ComputerizedSewing}(c) shows a set of lattices for one region of color, and the software then computes a sewing trace for the set of lattices, as depicted in Fig. \ref{Fig_ComputerizedSewing}(d). Since each stitch position of a sewing machine can be moved to its eight neighbor positions (left, right, up, down, up-left, up-right, down-left, and down-right), one set of adjacent lattices forms a supergrid graph which may be disconnected. Note that each lattice will be represented by a vertex of a supergrid graph, each region may be separated by many blocks in which each block represented a connected supergrid graph. The desired sewing trace of each set of adjacent lattices is the Hamiltonian cycle (path) of the corresponding connected supergrid graph when it is Hamiltonian. Note that if the corresponding supergrid graph is not Hamiltonian, then the sewing trace contains more than one paths and these paths must be concatenated by jump lines. After computing the sewing traces of all regions of colors, the software then transmits the computed stitching trace to the computerized sewing machine. Fig. \ref{Fig_ComputerizedSewing}(e) depicts the final sewing result for the image in Fig. \ref{Fig_ComputerizedSewing}(a). In addition, the structure of supergrid graphs can be used to design the topology of networks, and the network diameter of the proposed supergrid graphs is smaller than that of grid graphs.

\begin{figure}[thp]
\begin{center}
\includegraphics[width=0.9\textwidth]{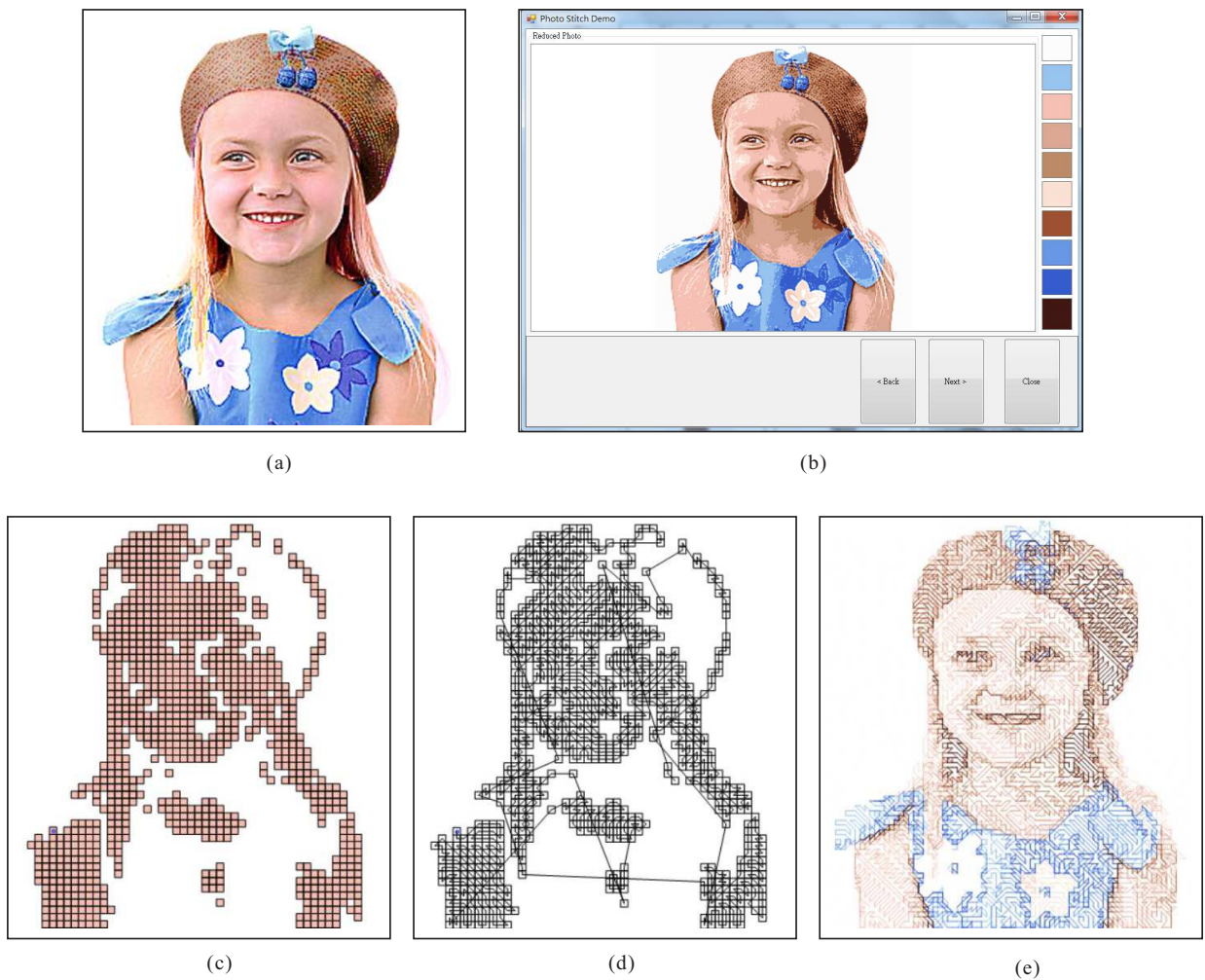}
\caption{(a) An input image for the computerized sewing software, (b) ten colors of regions produced by image processing, (c) a set of lattices for one region of color, (d) a possible sewing trace for the set of lattices in (c), and (e) an overview after computing sewing traces and cutting all jump lines of all regions of colors.} \label{Fig_ComputerizedSewing}
\end{center}
\end{figure}

In this paper, we will study the Hamiltonian property of a subclass of supergrid graphs, called \textit{linear-convex supergrid graphs}. A supergrid graph $G$ is \textit{linearly convex} if, for every line $l$ which contains an edge of $S^\infty$, the intersection of $l$ and $G$ is either a line segment (a path in $G$), or a point (a vertex in $G$), or empty. Linear-convex triangular grid graphs are defined similarly. In general, a linear-convex supergrid graph is always a linear-convex triangular grid graph with the same vertex set, but the reverse is not true. For example, a linear-convex triangular grid graph $T$ is shown in Fig. \ref{Fig_LinConvexTriSuper}(a), but its corresponding supergrid graph with vertex set $V(T)$ is not linear convex, as shown in Fig. \ref{Fig_LinConvexTriSuper}(b). In \cite{Gordon08, Reay00}, the authors showed that any 2-connected, linear-convex triangular grid graph with the exception of the Star of David graph always contains a Hamiltonian cycle. However, the result can not be applied to 2-connected, linear-convex supergrid graphs. In this paper, we first show that any 2-connected, linear-convex supergrid graph is locally connected. We then prove that any 2-connected, linear-convex supergrid graph contains a Hamiltonian cycle.

\begin{figure}[thp]
\begin{center}
\includegraphics[scale=0.9]{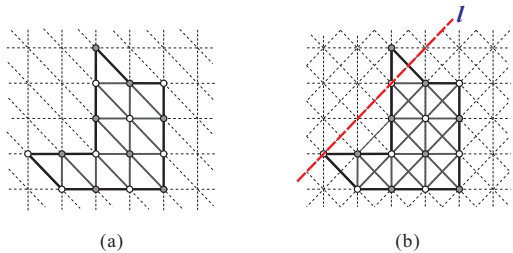}
\caption{(a) A linear-convex triangular grid graph $T$, and (b) a supergrid graph $S$ with vertex set $V(T)$ which is not linearly convex, where the intersection of line $l$ and graph $S$ contains two isolated vertices.} \label{Fig_LinConvexTriSuper}
\end{center}
\end{figure}

Related works of investigation are summarized as follows. Itai et al. \cite{Itai82} showed that the Hamiltonian cycle and path problems for grid graphs are NP-complete. They also gave the necessary and sufficient conditions for a rectangular grid graph having a Hamiltonian path between two given vertices. Zamfirescu et al. \cite{Zamfirescu92} gave the sufficient conditions for a grid graph having a Hamiltonian cycle, and proved that all grid graphs of positive width have Hamiltonian line graphs. Later, Chen et al. \cite{Chen02} improved the Hamiltonian path algorithm of \cite{Itai82} on rectangular grid graphs and presented a parallel algorithm for the Hamiltonian path problem with two given endpoints in rectangular grid graph (mesh). Also there is a polynomial-time algorithm for finding Hamiltonian cycles in solid grid graphs \cite{Lenhart97}. In \cite{Salman05}, Salman introduced alphabet grid graphs and determined classes of alphabet grid graphs which contain Hamiltonian cycles. Recently, Keshavarz-Kohjerdi and Bagheri \cite{Keshavarz12a} gave the necessary and sufficient conditions for the existence of Hamiltonian paths in alphabet grid graphs, and presented linear-time algorithms for finding Hamiltonian paths with two given endpoints in these graphs. Recently, Keshavarz-Kohjerdi et al. \cite{Keshavarz12b} presented a linear-time algorithm for computing the longest path between any two given vertices in rectangular grid graphs. Reay and Zamfirescu \cite{Reay00} proved that all 2-connected, linear-convex triangular grid graphs except one special case contain Hamiltonian cycles. The Hamiltonian cycle (path) for triangular grid graphs has been shown to be NP-complete \cite{Gordon08}. They also proved that all connected, locally connected triangular grid graphs (with one exception as in \cite{Reay00}) contain Hamiltonian cycles. The Hamiltonian cycle problem on hexagonal grid graphs has been shown to NP-complete \cite{Islam07}. Recently, we have proved the Hamiltonian cycle (path) problem for supergrid graphs to be NP-complete \cite{Hung15}. For more works of investigation on grid and triangular grid graphs, we refer readers to \cite{Bouchemakh14, Dettlaff14, Gravier02, Hu12, Marx04, Menke97, Muthammai13, Zhang11}.

The rest of the paper is organized as follows. In Section \ref{Preliminaries}, we give some notations and observations. Section \ref{Local-Connectivity} shows that any 2-connected, linear-convex supergrid graph is locally connected. In Section \ref{HC-LinearConvexSupergrid}, we show that any 2-connected, linear-convex supergrid graph contains a Hamiltonian cycle. Finally, we make some concluding remarks in Section \ref{Conclusion}.

\section{Preliminaries}\label{Preliminaries}

In this section, we will introduce some technologies and symbols. Some useful observations are also given. For graph-theoretic terminology not defined in this paper, the reader is referred to \cite{Bondy08}. Let $G = (V, E)$ be a graph with vertex set $V(G)$ and edge set $E(G)$. Let $S$ be a subset of vertices in $G$, and let $u, v$ be two vertices in $G$. We write $G[S]$ for the subgraph of $G$ \textit{induced} by $S$, $G-S$ for the subgraph $G[V-S]$, i.e., the subgraph induced by $V-S$. In general, we write $G-v$ instead of $G-\{v\}$. If $(u, v)$ is an edge in $G$, we say that $u$ is \textit{adjacent} to $v$ and $u, v$ are \textit{incident} to edge $(u, v)$. The notation $u\thicksim v$ (resp., $u \nsim v$) means that vertices $u$ and $v$ are adjacent (resp., non-adjacent). A \textit{neighbor} of $v$ in $G$ is any vertex that is adjacent to $v$. We use $N_G(v)$ to denote the set of neighbors of $v$ in $G$ and remove the subscript `$G$' of $N_G(v)$ from the notation if it has no ambiguity. The degree of $v$ is the number of vertices adjacent to vertex $v$, and denoted by $deg(v) = |N(v)|$. The \textit{distance} between $u$ and $v$ is the length of the shortest path between those two vertices.

A path $P$ of length $|P|-1$ in $G$, denoted by $v_1\rightarrow v_2\rightarrow \cdots \rightarrow v_{|P|-1} \rightarrow v_{|P|}$, is a sequence $(v_1, v_2, \cdots, v_{|P|-1}, v_{|P|})$ of vertices such that $(v_i,v_{i+1})\in E(G)$ and $v_i\neq v_{i+1}$ for $1 \leqslant i < |P|$. The first and last vertices visited by $P$ are called the \textit{path-start} and \textit{path-end} of $P$, denoted by $start(P)$ and $end(P)$, respectively. We will use $v_i \in P$ to denote ``$P$ visits $v_i$''.  A path from vertex $v_1$ to another vertex $v_k$ is denoted by $(v_1, v_k)$-path. In addition, we use $P$ to refer to the set of vertices visited by path $P$ if it is understood without ambiguity. On the other hand, a path is called the \textit{reversed path}, denoted by $\textrm{rev}(P)$, of path $P$ if it visits the vertices of $P$ from $end(P)$ to $start(P)$ sequentially; that is, the reversed path $\textrm{rev}(P)$ of $P=v_1\rightarrow v_2\rightarrow \cdots \rightarrow v_{|P|-1} \rightarrow v_{|P|}$ is $v_{|P|}\rightarrow v_{|P|-1}\rightarrow \cdots\rightarrow v_2 \rightarrow v_1$. A cycle is a path $C$ with $|V(C)| \geqslant 3$ and $start(C) \thicksim end(C)$. As usually, we use $P_k$ and $C_k$ to denote the path and cycle of $k$ vertices, respectively.

A cycle $C$ in a graph $G$ is \textit{extendable} if there exists a cycle $C'$ in $G$ (called the \textit{extension} of $C$) such that $V(C)\subset V(C')$ and $|V(C')|=|V(C)|+1$. If such a cycle $C'$ exists, we say that $C$ can be extended to $C'$. A graph $G$ is called \textit{cycle extendable} if $G$ has at least one cycle and if every non-hamiltonian cycle in $G$ is extendable. The cycle extendable graphs form a subclass of Hamiltonian graphs \cite{Hendry90}. Thus, every cycle extendable graph is Hamiltonian.

We next give some observations on the relations among cycle, path, and vertex in the following. Let $C$ be a cycle of a graph and let vertex $x\notin C$. If there exists an edge $u, v$ in $C$ such that $u\thicksim x$ and $v\thicksim x$, then $C\rightarrow x$ is a cycle of the graph, where $start(C), end(C)\in\{u, v\}$. Thus the following proposition holds true.

\begin{pro}\label{CycleVertexMerge_Obs}
Let $C$ be a cycle of a graph $G$ and let $x$ be a vertex in $G-V(C)$. If there exists an edge $(u, v)$ in $C$ such that $u\thicksim x$ and $v\thicksim x$, then $C$ and $x$ can be merged into a cycle of $G$.
\end{pro}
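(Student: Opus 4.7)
The proposition is essentially an edge-splicing observation: the extra vertex $x$ has two neighbors $u,v$ that are consecutive on $C$, so we should be able to ``insert'' $x$ between them along the cycle. My plan is to write this insertion explicitly and then check that the resulting vertex sequence satisfies the definition of a cycle given earlier in the Preliminaries.

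First I would, without loss of generality, rotate and possibly reverse the cyclic listing of $C$ so that the edge $(u,v)$ appears at the start, writing
\[
C \;=\; u \rightarrow v \rightarrow w_1 \rightarrow w_2 \rightarrow \cdots \rightarrow w_{k} \rightarrow u,
\]
where $\{w_1,\dots,w_k\} = V(C)\setminus\{u,v\}$ (so $|V(C)| = k+2$). This is legitimate because the notion of cycle given in the paper is a closed walk with $start(C)\sim end(C)$, and any cyclic rotation or reversal yields another valid representation of the same cycle.

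Next I would define the candidate extension by splicing $x$ into the edge $(u,v)$:
\[
C' \;=\; x \rightarrow v \rightarrow w_1 \rightarrow w_2 \rightarrow \cdots \rightarrow w_k \rightarrow u,
\]
so that $start(C')=x$ and $end(C')=u$. To confirm $C'$ is a cycle of $G$ I need to verify three things: (i) every consecutive pair is an edge of $G$; (ii) the vertices listed are pairwise distinct; and (iii) $start(C')\sim end(C')$. For (i), the edges $v\rightarrow w_1, w_1\rightarrow w_2, \dots, w_k\rightarrow u$ are inherited from $C$, and the new edge $x\rightarrow v$ is present by the hypothesis $v\thicksim x$. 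For (ii), the $w_i$'s together with $u,v$ are distinct because they come from the cycle $C$, and $x$ is distinct from all of them because $x\in V(G)-V(C)$. For (iii), the closing adjacency $x\thicksim u$ is precisely the other hypothesis of the proposition. Finally, $|V(C')| = |V(C)|+1 \geq 4 \geq 3$, so the size condition on cycles is satisfied.

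There is essentially no technical obstacle here; the statement is a bookkeeping observation and the only thing to be careful about is matching the paper's path/cycle conventions (making sure the rotated listing of $C$ is allowed and that we use exactly the adjacencies supplied in the hypothesis). Once the relabeling is set up cleanly, the verification of (i)--(iii) is immediate, and $C'$ serves simultaneously as the merged cycle and as an explicit extension of $C$ by the vertex $x$.
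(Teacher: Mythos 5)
Your proposal is correct and follows essentially the same route as the paper: the paper's justification is the one-line observation that one may re-root the cycle so that $start(C),end(C)\in\{u,v\}$ (i.e., the edge $(u,v)$ becomes the wrap-around edge) and then append $x$, using $v\thicksim x$ and $x\thicksim u$ to close the new cycle. Your version merely spells out the same splice with the distinctness and adjacency checks made explicit.
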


The above observation can be extended to a path $P$ as the following proposition.

\begin{pro}\label{CyclePathMerge_Obs}
Let $C$ and $P$ be a cycle and a path, respectively, of a graph $G$ such that $V(C)\cap V(P)=\emptyset$. If there exists an edge $(u, v)$ in $C$ such that $u\thicksim start(P)$ and $v\thicksim end(P)$, then $C$ and $P$ can be concatenated into a cycle of $G$.
\end{pro}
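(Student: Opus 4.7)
The plan is to prove this proposition by explicit construction, which in fact mirrors the argument underlying Proposition~\ref{CycleVertexMerge_Obs} with the singleton $x$ replaced by the path $P$. Since $(u,v)$ is an edge of the cycle $C$, by choosing an appropriate direction of traversal I can assume without loss of generality that $C$ is written as $C = u \rightarrow v \rightarrow w_1 \rightarrow w_2 \rightarrow \cdots \rightarrow w_{|C|-2} \rightarrow u$, so that the edge $(u,v)$ occurs at the ``beginning'' of the cyclic sequence and $u$ is the path-end when we follow $C$ from $v$ back around. Symmetrically, by possibly replacing $P$ by its reversed path $\mathrm{rev}(P)$, I may assume $start(P) = p_1$ with $p_1 \thicksim u$ and $end(P) = p_k$ with $p_k \thicksim v$, exactly as the hypothesis supplies.

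Next I would delete the edge $(u,v)$ from $C$ and splice $P$ between its endpoints. Concretely, I would define the sequence
\[
C' \;=\; u \rightarrow p_1 \rightarrow p_2 \rightarrow \cdots \rightarrow p_{k-1} \rightarrow p_k \rightarrow v \rightarrow w_1 \rightarrow w_2 \rightarrow \cdots \rightarrow w_{|C|-2} \rightarrow u.
\]
I would then verify three things: (i) every consecutive pair in $C'$ is an edge of $G$ (the pairs inside $C$ and inside $P$ are inherited, while $(u,p_1)$ and $(p_k,v)$ come from the hypothesis $u \thicksim start(P)$ and $v \thicksim end(P)$); (ii) no vertex is repeated, which uses the disjointness hypothesis $V(C) \cap V(P) = \emptyset$ together with the fact that $C$ and $P$ are themselves non-repeating; and (iii) $|V(C')| = |V(C)| + |V(P)| \geqslant 3$, so $C'$ meets the definition of a cycle given in Section~\ref{Preliminaries}.

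There is no real obstacle here: the claim is an immediate combinatorial merge, and the only care needed is in aligning the orientations of $C$ and $P$ so that the two adjacency conditions $u \thicksim start(P)$ and $v \thicksim end(P)$ can be applied in the correct order; this is why I invoke $\mathrm{rev}(P)$ (and a rotation of $C$) at the outset. Once those conventions are fixed, the construction above exhibits the required cycle of $G$ on $V(C) \cup V(P)$, establishing the proposition.
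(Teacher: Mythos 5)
Your construction is correct and is essentially the paper's own argument: the paper justifies this proposition as the direct extension of Proposition~\ref{CycleVertexMerge_Obs}, rewriting $C$ so that the edge $(u,v)$ sits at its ends and then concatenating $C\rightarrow P$ via the edges $(v,\mathit{end}(P))$ and $(\mathit{start}(P),u)$, which is exactly your splice. The only cosmetic remark is that under the paper's convention a cycle is a vertex sequence with $start(C)\thicksim end(C)$, so $u$ should not be listed twice in your display; this does not affect the validity of the argument.
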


Let $C_1$ and $C_2$ be two vertex-disjoint cycles of a graph $G$. If there exist two edges $(u_1, v_1)\in C_1$ and $(u_2, v_2)\in C_2$ such that $u_1\thicksim u_2$ and $v_1\thicksim v_2$, then $C_1\rightarrow C_2$ is a cycle of $G$, where $end(C_1)=v_1$ and $start(C_2)=v_2$. Thus we have the following proposition.

\begin{pro}\label{CycleCycleMerge_Obs}
Let $C_1$ and $C_2$ be two vertex-disjoint cycles of a graph $G$. If there exist two edges $(u_1, v_1)\in C_1$ and $(u_2, v_2)\in C_2$ such that $u_1\thicksim u_2$ and $v_1\thicksim v_2$, then $C_1$ and $C_2$ can be concatenated into a cycle of $G$.
\end{pro}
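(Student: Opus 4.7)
The plan is to construct the merged cycle explicitly by ``opening up'' each of the two given cycles along its distinguished edge and then splicing the two resulting paths together using the two cross-edges $(u_1,u_2)$ and $(v_1,v_2)$. This is essentially the same idea already used in Propositions \ref{CycleVertexMerge_Obs} and \ref{CyclePathMerge_Obs}, simply applied symmetrically on both sides.

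First, I would reorient each cycle so that the distinguished edge becomes the closing edge of the traversal. Concretely, writing $C_1$ as a cyclic sequence beginning at $v_1$ and ending at $u_1$, the deletion of the edge $(u_1,v_1)$ from $C_1$ yields a Hamiltonian path $P_1$ of $C_1$ with $start(P_1)=v_1$ and $end(P_1)=u_1$ that visits every vertex of $V(C_1)$ exactly once. In the same way, deleting $(u_2,v_2)$ from $C_2$ yields a Hamiltonian path $P_2$ of $C_2$ with $start(P_2)=u_2$ and $end(P_2)=v_2$.

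Next, I would concatenate these ingredients into the desired cycle $C$ on the vertex set $V(C_1)\cup V(C_2)$: begin at $v_1$, traverse $P_1$ to reach $u_1$, use the hypothesised edge $(u_1,u_2)$ to cross into $C_2$, traverse $P_2$ from $u_2$ to $v_2$, and finally use the hypothesised edge $(v_2,v_1)$ to close the traversal back at $v_1$. Since $V(C_1)\cap V(C_2)=\emptyset$, and each $P_i$ visits its own vertex set exactly once, the resulting sequence visits every vertex of $V(C_1)\cup V(C_2)$ exactly once, so $C$ is a simple cycle of $G$.

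There is no real obstacle here; the statement is essentially a combinatorial picture. The only things worth explicitly checking are that all edges used by $C$ actually exist in $G$ (the edges of $P_1$ and $P_2$ come from $C_1$ and $C_2$, while the two cross-edges are given by hypothesis) and that vertex-disjointness of $C_1$ and $C_2$ rules out repeated vertices in the concatenation. Alternatively, one could phrase the argument even more compactly by invoking Proposition \ref{CyclePathMerge_Obs} with the cycle $C_1$ and the path $P_2$, since the edge $(u_1,v_1)\in C_1$ satisfies $u_1\thicksim start(P_2)=u_2$ and $v_1\thicksim end(P_2)=v_2$.
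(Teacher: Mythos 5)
Your proposal is correct and matches the paper's own justification, which likewise orients the two cycles so that the distinguished edges become the closing edges and splices the resulting traversals together via the two cross-edges (the paper states this as $C_1\rightarrow C_2$ with $end(C_1)=v_1$ and $start(C_2)=v_2$). Your closing remark that the statement also follows from Proposition \ref{CyclePathMerge_Obs} applied to $C_1$ and the path obtained by opening $C_2$ is a valid, slightly more economical packaging of the same idea.
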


Let $C_1$ and $C_2$ be two cycles of a graph $G$ such that $V(C_1)\cap V(C_2)=\{v\}$. If there exist two edges $(u, v)\in C_1$ and $(w, v)\in C_2$ such that $u\thicksim w$, then $C_1\rightarrow C_2$ forms a cycle of $G$, where $end(C_1)=u$ and $start(C_2)=w$. Hence, the following observation is true.

\begin{pro}\label{CycleCycleVertexMerge_Obs}
Let $C_1$ and $C_2$ be two cycles of a graph $G$ such that $V(C_1)\cap V(C_2)=\{v\}$. If there exist two edges $(u, v)\in C_1$ and $(w, v)\in C_2$ such that $u\thicksim w$, then $C_1$ and $C_2$ can be concatenated into a cycle of $G$.
\end{pro}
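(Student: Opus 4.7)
The plan is to prove Proposition \ref{CycleCycleVertexMerge_Obs} by an explicit construction that splices the two cycles together at the shared vertex $v$. First I would remove the edge $(u,v)$ from $C_1$, which yields a Hamiltonian path $P_1$ of $C_1$ whose endpoints are $u$ and $v$, since a cycle with one edge deleted is a path on the same vertex set. An entirely symmetric argument applied to $C_2$ and the edge $(w,v)$ produces a Hamiltonian path $P_2$ of $C_2$ with endpoints $w$ and $v$.

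Next I would traverse $P_1$ starting at $v$ and ending at $u$, cross the guaranteed edge $(u,w)$, and then traverse $P_2$ starting at $w$ and ending at $v$. This closed walk begins and ends at $v$, uses only edges of $G$, and visits every vertex of $V(C_1)\cup V(C_2)$. The hypothesis $V(C_1)\cap V(C_2)=\{v\}$ guarantees that the interior vertices of $P_1$ and $P_2$ are pairwise disjoint, so no vertex is repeated except for the closing identification of the endpoints at $v$. Therefore the construction produces a simple cycle of $G$, which is precisely the concatenation claimed by the proposition.

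I do not anticipate a genuine obstacle here: the statement is essentially a standard cycle-splicing maneuver analogous to Propositions \ref{CycleVertexMerge_Obs}--\ref{CycleCycleMerge_Obs}, and the only point requiring explicit verification is the simplicity of the resulting cycle, which follows immediately from the disjointness assumption $V(C_1)\cap V(C_2)=\{v\}$ together with the fact that $u\neq v\neq w$ (as $(u,v)$ and $(w,v)$ are edges).
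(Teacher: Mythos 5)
Your construction is correct and is essentially the same splice the paper uses: it breaks $C_1$ at the edge $(u,v)$ and $C_2$ at $(w,v)$, joins the resulting paths through the shared vertex $v$ on one side and the edge $(u,w)$ on the other, and checks simplicity via $V(C_1)\cap V(C_2)=\{v\}$. Nothing further is needed.
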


Let $S^\infty$ be the infinite graph whose vertex set consists of all points of the plane with integer coordinates and in which two vertices are adjacent if and only if the difference of their $x$ or $y$ coordinate is not larger than 1. A \textit{supergrid graph} is a finite, vertex-induced subgraph of $S^\infty$. For a vertex $v$ in a supergrid graph, let $v_x$ and $v_y$ denote $x$ and $y$ coordinates of $v$, respectively. We color vertex $v$ to be \textit{white} if $v_x+v_y\equiv 0$ (mod 2); otherwise, $v$ is colored to be \textit{black}. Then there are eight possible neighbors of vertex $v$ including four white vertices and four black vertices. For simplicity, we denote the possible neighbors of $v=(v_x, v_y)$ as $UL(v)=(v_x-1, v_y-1)$ (up-left), $U(v)=(v_x, v_y-1)$ (up), $UR(v)=(v_x+1, v_y-1)$ (up-right), $L(v)=(v_x-1, v_y)$ (left), $R(v)=(v_x+1, v_y)$ (right), $DL(v)=(v_x-1, v_y+1)$ (down-left), $D(v)=(v_x, v_y+1)$ (down), and $DR(v)=(v_x+1, v_y+1)$ (down-right). Fig. \ref{Fig_Neighbors} depicts the possible neighbors of vertex $v$ in a supergrid graph. For example, $u=UL(v)$ means that vertex $u$ is the up-left neighbor of vertex $v$.

\begin{figure}[t]
\begin{center}
\includegraphics[scale=1.0]{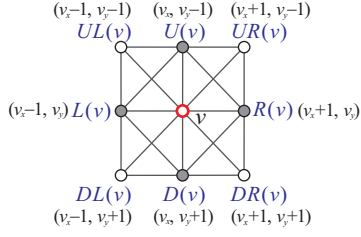}
\caption{The possible neighbors of vertex $v=(v_x, v_y)$ in a supergrid graph.} \label{Fig_Neighbors}
\end{center}
\end{figure}

A supergrid graph $G$ is \textit{linearly convex} if, for every line $l$ which contains an edge of $S^\infty$, the intersection of $l$ and $G$ is either a line segment (a path in $G$), or a point (a vertex in $G$), or empty. For example, Fig. \ref{Fig_LinConvexSuper}(a) is a linear-convex supergrid graph but Fig. \ref{Fig_LinConvexSuper}(b) is not a linear-convex supergrid graph since there exists at least one line $l$ in $S^\infty$ such that the intersection of $l$ and $G$ includes two isolated vertices.

\begin{figure}[t]
\begin{center}
\includegraphics[scale=0.9]{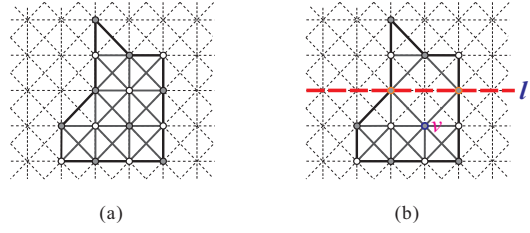}
\caption{(a) A linear-convex supergrid graph, and (b) a supergrid graph which is not linearly convex, where solid lines indicate edges of graphs and the bold dashed line $l$ indicates that the intersection of $l$ and the graph in (b) contains two isolated vertices.} \label{Fig_LinConvexSuper}
\end{center}
\end{figure}

Let $v$ be a vertex in a linear-convex supergrid graph $G$. We observe that if both $UL(v)$ and $UR(v)$ are vertices in $G$, then $U(v)$ is also a vertex in $G$; otherwise, we can find a parallelism line $l$ passing through $UL(v)$, $UR(v)$ but not passing through $U(v)$, and the intersection of $l$ and $G$ contains at least two paths of $G$. For example, Fig. \ref{Fig_LinConvexSuper}(b) depicts such a situation. Thus, we have the following proposition.

\begin{pro}\label{ImmediateVertexInclude}
Let $v$ be a vertex of a linear-convex supergrid graph $G$. If $UL(v)$ and $UR(v)$ (resp., $UL(v)$ and $DL(v)$, $UR(v)$ and $DR(v)$, $DL(v)$ and $DR(v)$) are in $G$, then $U(v)$ (resp., $L(v)$, $R(v)$,$D(v)$ ) is in $G$.
\end{pro}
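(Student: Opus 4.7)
The plan is to argue by contradiction directly from the definition of linear convexity, handling the four implications in parallel as symmetric cases. In each case the two given neighbors of $v$ lie on a common horizontal or vertical line of $S^\infty$, and the ``missing'' middle neighbor sits strictly between them on that same line, so its absence would split the intersection of the line with $G$ into two disjoint pieces, contradicting linear convexity.

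Concretely, I first handle the pair $UL(v),UR(v)$. I would consider the horizontal line $l$ defined by $y=v_y-1$. Since $l$ contains the $S^\infty$-edge joining $UL(v)=(v_x-1,v_y-1)$ and $U(v)=(v_x,v_y-1)$, it is one of the lines referred to in the definition of linear convexity. Assuming for contradiction that $U(v)\notin G$ while $UL(v),UR(v)\in G$, the intersection $l\cap G$ contains the two lattice points $(v_x-1,v_y-1)$ and $(v_x+1,v_y-1)$ but not the lattice point $(v_x,v_y-1)$ lying strictly between them. Hence $l\cap G$ is neither a single line segment (a path of $G$ along $l$), nor a point, nor empty, which contradicts the assumption that $G$ is linearly convex. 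This is precisely the forbidden configuration drawn in Fig.~\ref{Fig_LinConvexSuper}(b).

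The three remaining implications would be obtained by the same argument applied to an appropriate line of $S^\infty$: the vertical line $x=v_x-1$ for the pair $UL(v),DL(v)$ with missing middle $L(v)$; the vertical line $x=v_x+1$ for $UR(v),DR(v)$ with missing middle $R(v)$; and the horizontal line $y=v_y+1$ for $DL(v),DR(v)$ with missing middle $D(v)$. In each case the middle vertex is a lattice point strictly between the two given neighbors along the chosen line, so its absence forces $l\cap G$ to contain at least two distinct points separated by a gap.

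The statement is essentially a direct unpacking of the definition of linear convexity, so there is no serious obstacle to overcome. The only point that needs to be stated carefully is that the missing middle vertex really is a lattice point that lies \emph{strictly} between the two given neighbors on the relevant horizontal or vertical line; once this is noted, its absence immediately prevents $l\cap G$ from being a line segment, a point, or empty, and the contradiction is immediate.
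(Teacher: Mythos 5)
Your proof is correct and takes essentially the same approach as the paper: the paper's (one-sentence) justification likewise considers the line of $S^\infty$ through the two given neighbours and observes that if the middle vertex were absent, the intersection of that line with $G$ would split into at least two pieces, contradicting linear convexity. Your write-up simply spells out more carefully that the missing vertex lies strictly between the two given ones on that line, which is the only point of substance.
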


\section{Local connectivity of linear-convex supergrid graphs}\label{Local-Connectivity}

In this section, we will establish an interrelation between 2-connected, linear-convex supergrid graphs and locally connected supergrid graphs. Recall that a graph $G$ is $2$-connected if there are two vertex-disjoint paths between every pair of vertices in $G$, and $G$ is locally connected if for every vertex $u$ in $G$, the induced subgraph $G[N(u)]$ of $G$ is connected. A locally connected supergrid graph is not necessary to be linearly convex. For example, Fig. \ref{Fig_LocallyConnect-NotLinear} shows a locally connected supergrid graph, which is not linearly convex: the intersection of the graph and the bold dashed line $l$ is the union of a point (the vertex $u$ of the graph) and a line segment (the edge $(v, w)$ of the graph). In the following theorem, we will prove that any 2-connected, linear-convex supergrid graph is locally connected.

\begin{figure}[t]
\begin{center}
\includegraphics[scale=0.95]{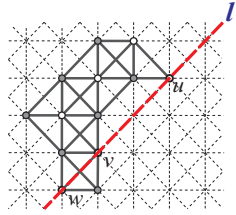}
\caption{A locally connected, but not linear-convex supergrid graph, where solid lines indicate edges of graphs and the bold dashed line $l$ contains edges of $S^\infty$.} \label{Fig_LocallyConnect-NotLinear}
\end{center}
\end{figure}

\begin{thm}\label{LocalConnectivity}
Let $G$ be a $2$-connected, linear-convex supergrid graph. Then, $G$ is locally connected.
\end{thm}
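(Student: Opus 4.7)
The plan is to argue by contradiction. Suppose $v\in V(G)$ is such that $G[N(v)]$ is disconnected; the goal is to show that $v$ is a cut vertex of $G$, which will contradict 2-connectivity. Recall from Fig.~\ref{Fig_Neighbors} that the eight potential neighbors of $v$ form a ring $UL,U,UR,R,DR,D,DL,L$ in which consecutive entries are $S^\infty$-adjacent, supplemented by four chord edges $(U,L),(U,R),(D,L),(D,R)$ among the cardinals. Each corner has exactly two $S^\infty$-neighbors among the other ring positions (its two flanking cardinals), while each cardinal is $S^\infty$-adjacent to every other ring position except the opposite cardinal and the two corners on the opposite side.

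The geometric engine of the argument is the following sharpening of Proposition~\ref{ImmediateVertexInclude}: if a neighbor position $X$ of $v$ is not in $V(G)$, then by linear convexity applied to the line $\ell$ through $v$ and $X$ the maximal segment of $\ell\cap V(G)$ containing $v$ cannot extend past $v$ toward $X$, and linear convexity forbids any other segment on $\ell$, so the entire ray $\{v+k(X-v):k\geq 1\}$ is disjoint from $V(G)$. Using this I enumerate the configurations of $S:=N(v)\cap V(G)$ that force $G[N(v)]$ to be disconnected, splitting on the number and pattern of cardinals in $S$ and invoking Proposition~\ref{ImmediateVertexInclude} to discard impossible patterns. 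A short case check shows that every such configuration falls into one of two (not mutually exclusive) types: (a) an \emph{opposite-cardinal gap}, where some pair $\{U,D\}$ or $\{L,R\}$ of opposite cardinals is entirely absent from $V(G)$, so by the ray lemma the whole axis $x=v_x$ or $y=v_y$ contains no vertex of $G$ other than $v$; or (b) an \emph{adjacent-cardinal gap}, where the two cardinals flanking a common quadrant are both absent while the corresponding corner is present in $S$ and is isolated in $G[N(v)]$ (for example $D,R\notin V(G)$ but $DR\in V(G)$), so that the two perpendicular rays from $v$ into that quadrant are empty in $G$.

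In either type the barriers convert directly into a cut-vertex obstruction. Since an edge of $S^\infty$ changes each coordinate by at most $1$, no path in $G-v$ can step across a line on which $G-v$ has no vertex. In type (a), $G-v$ splits into two disjoint half-planes, and the two components of $G[N(v)]$ lie in opposite half-planes. In type (b), the two empty perpendicular rays from $v$, together with the removal of $v$ itself, completely seal off the open quadrant that contains the isolated corner, so that corner cannot reach the rest of $G[N(v)]$ within $G-v$. In both cases $v$ is a cut vertex of $G$, contradicting 2-connectivity. The main obstacle is the second-step case enumeration: verifying cleanly that every disconnected pattern of $S$ produces either a full-axis or an $\mathsf{L}$-shaped barrier requires going through a handful of subcases and applying Proposition~\ref{ImmediateVertexInclude} with care, even though each individual subcase is straightforward.
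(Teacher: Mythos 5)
Your proposal is correct: the claimed dichotomy does hold (a quick count of how many cardinal neighbors lie in $V(G)$ shows that four or three cardinals force $G[N(v)]$ to be connected, two adjacent cardinals force the opposite corner to be present and isolated, and every remaining pattern has an entirely absent opposite-cardinal pair), the ray-emptiness lemma is a valid consequence of linear convexity, and both barrier types do make $v$ a cut vertex. The substance is the same as the paper's proof --- both arguments pit a path joining two components of $G[N(v)]$ against an empty line or pair of rays through $v$ --- but you run the contradiction in the opposite direction and with a different decomposition. The paper fixes the disconnected neighborhood, invokes $2$-connectivity to produce a $(v,w)$-path avoiding $u$, and then exhibits a line $l$ of $S^\infty$ whose intersection with $G$ contains $u$ as an isolated point plus a point of that path, contradicting linear convexity; it organizes this as five cases by $\deg(u)$ with roughly twenty pictured subcases (Figs.~\ref{Fig_LocalConnectivity-deg_2}--\ref{Fig_LocalConnectivity-deg_56}), and your L-shaped quadrant barrier is exactly its two part-rays $l'_1,l'_2$ in the subcases of Fig.~\ref{Fig_LocalConnectivity-deg_4}(e)--(h). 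You instead let linear convexity empty out whole rays first and conclude that $v$ is a cut vertex, classifying all disconnected neighborhoods into just two barrier types keyed to which cardinals are absent. What your version buys is a cleaner certificate of exhaustiveness (the case split is governed by the set of present cardinals rather than by degree and ad hoc symmetry) and a reusable lemma (empty ray plus the unit-step property of $S^\infty$ edges gives a separator); what the paper's version buys is concreteness, since each subcase is settled by pointing at a single explicit line in a figure. The one place you should be careful when writing the case check in full is type (a): you must also note that $N(v)\cap V(G)$ meets both open half-planes (if it lay entirely on one side of the empty axis, Proposition~\ref{ImmediateVertexInclude} would force the intervening cardinal to be present and $G[N(v)]$ would be connected), since otherwise the empty axis alone does not yet separate two neighbors of $v$.
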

\begin{proof}
We prove this theorem by contradiction. Assume by contradiction that $G$ contains a vertex $u$ such that $u$ is not locally connected. Then, $G[N(u)]$ is not connected. The $2$-connectedness of $G$ implies that $deg(u)\geqslant 2$. On the other hand, if $deg(u)=7$ or $8$, then $G[N(u)]$ is connected and contains a path $P_7$ or $C_8$ as subgraph. Thus, $deg(u)\leqslant 6$. Then, $2\leqslant deg(u)\leqslant 6$. Consider the following five possible cases for the degree of $u$.

\textit{Case} 1: $deg(u)=2$. Let $N(u)=\{v, w\}$. Then, $v\nsim w$. By symmetry, we need only consider the subcases, as shown in Fig. \ref{Fig_LocalConnectivity-deg_2}. Consider the subcase in Fig. \ref{Fig_LocalConnectivity-deg_2}(a). Then, $v=UL(u)$ and $w=UR(u)$. By Proposition \ref{ImmediateVertexInclude}, $U(u)$ is in $G$ and it contradicts that $deg(u)=2$. On the other hand, consider the subcase in Fig. \ref{Fig_LocalConnectivity-deg_2}(b). Then, $v=UL(u)$ and $w=R(u)$. Since $G$ is $2$-connected, there exists a $(v, w)$-path $P$ in $G$ with internal vertices different from $u$. Let $l$ be a line which contains the edge $(U(u), u)$ of $S^\infty$. Then, the intersection of $l$ and $G$ contains vertex $u$ as an isolated vertex (since $U(u)$ and $D(u)$ are not in $G$) and at least one vertex of $P$. This contradicts that $G$ is linearly convex. The other subcases can be proved by the same arguments. Fig. \ref{Fig_LocalConnectivity-deg_2}(c)--(d) also depicts the lines $l$ of $S^\infty$ used to obtain a contradiction.

\begin{figure}[t]
\begin{center}
\includegraphics[scale=1.0]{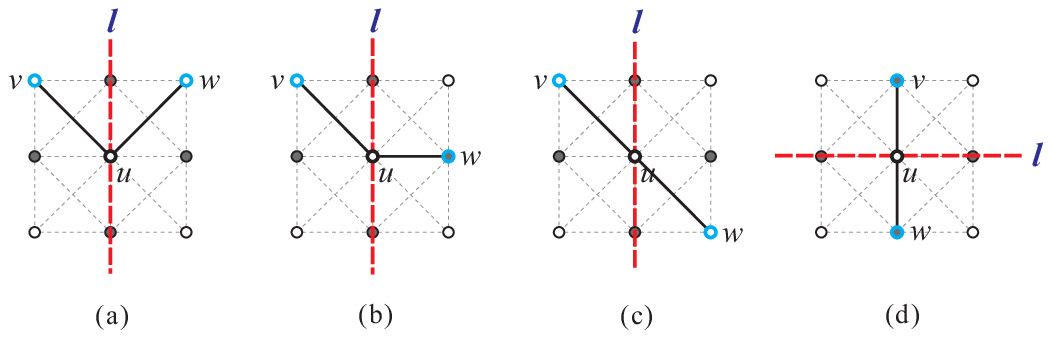}
\caption{The possible subcases for $deg(u)=2$ in the proof of Theorem \ref{LocalConnectivity}, where the lines $l$ of $S^\infty$ used to obtain a contradiction are drawn by bold dashed lines.} \label{Fig_LocalConnectivity-deg_2}
\end{center}
\end{figure}

\textit{Case} 2: $deg(u)=3$. Let $N(u)=\{v, w, z\}$. In this case, the possible related locations among $u, v, w, z$ are depicted in Fig. \ref{Fig_LocalConnectivity-deg_3}. Consider the subcase of Fig. \ref{Fig_LocalConnectivity-deg_3}(a). Then, $v=UL(u)$, $w=U(u)$, and $z=DR(u)$. Since $G$ is $2$-connected, there exists a $(v, z)$-path $P$ in $G$ with internal vertices different from $u$. Let $l$ be a line which contains the edge $(L(u), u)$ of $S^\infty$. Then, the intersection of $l$ and $G$ contains vertex $u$ as an isolated vertex (since $L(u)$ and $R(u)$ are not in $G$) and at least one vertex of $P$. This contradicts that $G$ is linearly convex. By the same arguments, we can find a line $l$ of $S^\infty$ to get a contradiction for the other cases. Fig. \ref{Fig_LocalConnectivity-deg_3}(b)--(f) also depict these lines.

\begin{figure}[tp]
\begin{center}
\includegraphics[scale=1.0]{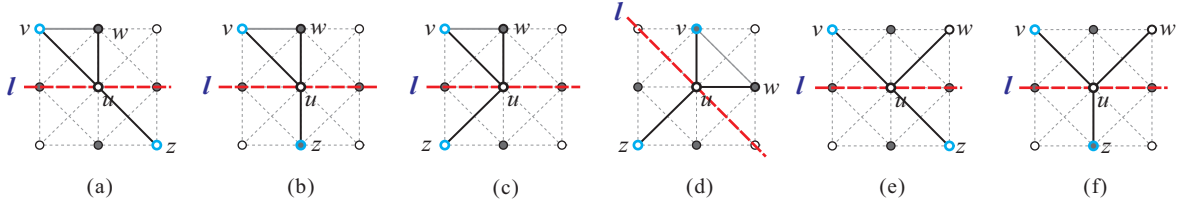}
\caption{The possible subcases for $deg(u)=3$ in the proof of Theorem \ref{LocalConnectivity}, where the lines $l$ of $S^\infty$ used to obtain a contradiction are drawn by bold dashed lines.} \label{Fig_LocalConnectivity-deg_3}
\end{center}
\end{figure}

\textit{Case} 3: $deg(u)=4$. Let $N(u)=\{v, w, z, x\}$. By symmetry, the possible related locations among $u, v, w, z, x$ are shown in Fig. \ref{Fig_LocalConnectivity-deg_4}. For the subcases of Fig. \ref{Fig_LocalConnectivity-deg_4}(a)--(d), we can obtain a contradiction by similar arguments in proving Case 1, where there exists a $(v, x)$-path in $G$ with internal vertices different from $u$. Consider the subcase of  Fig. \ref{Fig_LocalConnectivity-deg_4}(e). Then, $v=UL(u)$, $w=U(u)$, $z=R(u)$, and $x=DL(u)$. Since $G$ is $2$-connected, there exists a $(v, x)$-path $P$ in $G$ with internal vertices different from $u$. Let $l_1$ be a line which contains the edge $(u, z)$ of $S^\infty$, and let $l_2$ be a line which contains the edge $(v, u)$ of $S^\infty$. Clearly, the intersection of $l_1$ and $G$ contains the edge $(u, z)$ but does not contain $L(u)$, and the intersection of $l_2$ and $G$ contains the edge $(v, u)$ but does not contain $DR(u)$. Let $l'_1$ and $l'_2$ be two part rays of $l_1$ and $l_2$, respectively, such that $l'_1$ starts from $u$ and passes $L(u)$, and $l'_2$ starts from $u$ and passes $DR(u)$. Then, the intersection of $l'_1\cup l'_2$ and $G$ contains at least one vertex of $P$. This contradicts that $G$ is linearly convex. By the same arguments, we can arrive at a contradiction for the other subcases of Fig. \ref{Fig_LocalConnectivity-deg_4}(f)--(h) which also indicate lines $l_1$ and $l_2$ in these subcases.

\begin{figure}[tp]
\begin{center}
\includegraphics[scale=1.0]{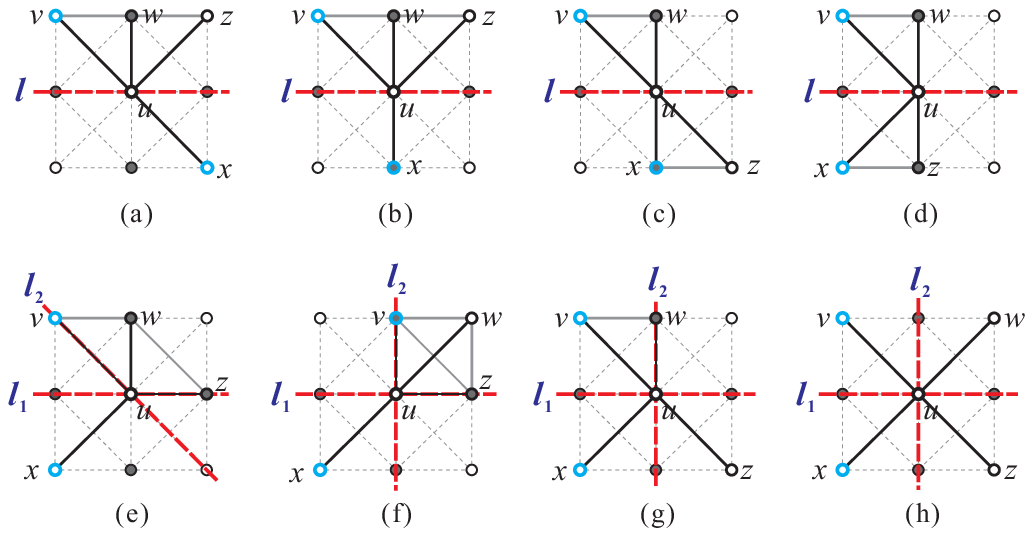}
\caption{The possible subcases for $deg(u)=4$ in the proof of Theorem \ref{LocalConnectivity}, where the lines $l$ of $S^\infty$ used to obtain a contradiction are drawn by bold dashed lines.} \label{Fig_LocalConnectivity-deg_4}
\end{center}
\end{figure}

\textit{Case} 4: $deg(u)=5$. Let $N(u)=\{x_1, x_2, x_3, x_4, x_5\}$. By symmetry, the possible related locations among $u$, $x_1$, $x_2$, $x_3$, $x_4$, $x_5$ are shown in Fig. \ref{Fig_LocalConnectivity-deg_56}(a)--(d). The proofs of subcases in Fig. \ref{Fig_LocalConnectivity-deg_56}(a)--(b) are similar to the proof of Case 1, and the proofs of subcases in Fig. \ref{Fig_LocalConnectivity-deg_56}(c)--(d) are similar to the proof of Case 3 in Fig. \ref{Fig_LocalConnectivity-deg_4}(e), where there exists a $(x_1, x_5)$-path in $G$ with internal vertices different from $u$.

\begin{figure}[tp]
\begin{center}
\includegraphics[scale=1.0]{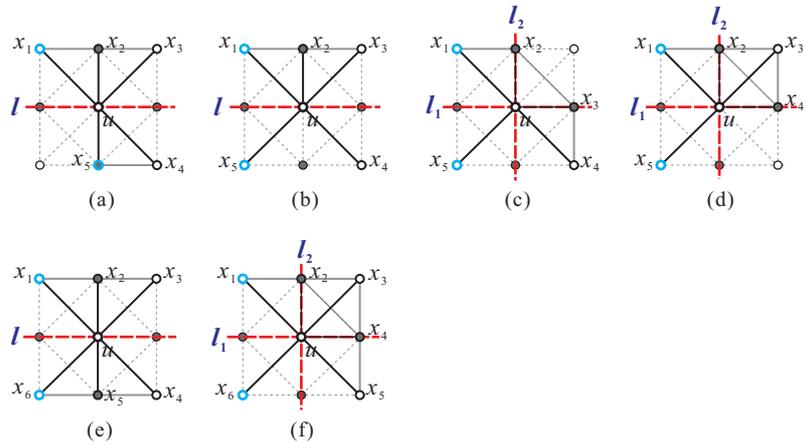}
\caption{The possible subcases for $deg(u)=5$ and $deg(u)=6$ in the proof of Theorem \ref{LocalConnectivity}, where (a)--(d) are subcases of $deg(u)=5$ and (e)--(f) are subcases of $deg(u)=6$.} \label{Fig_LocalConnectivity-deg_56}
\end{center}
\end{figure}

\textit{Case} 5: $deg(u)=6$. Let $N(u)=\{x_1, x_2, x_3, x_4, x_5, x_6\}$. By symmetry, the possible related locations among $u, x_1, x_2, x_3, x_4, x_5, x_6$ are shown in Fig. \ref{Fig_LocalConnectivity-deg_56}(e)--(f). The proof of subcase in Fig. \ref{Fig_LocalConnectivity-deg_56}(e) is similar to the proof of Case 1. And the proof of subcase in Fig. \ref{Fig_LocalConnectivity-deg_56}(f) is similar to the proof of Case 3 in Fig. \ref{Fig_LocalConnectivity-deg_4}(e). Note that there exists a $(x_1, x_6)$-path in $G$ with internal vertices different from $u$.

It follows from the above cases that a contradiction occurs, and, hence, every vertex of a $2$-connected, linear-convex supergrid graph is locally connected. Thus, a $2$-connected, linear-convex supergrid graphs is locally connected. This completes the proof.
\end{proof}

Let $G$ be a $2$-connected, linear-convex supergrid graph and let $u$ be a vertex in $G$. Since $G$ is $2$-connected, $deg(u)\geqslant 2$. By Theorem \ref{LocalConnectivity}, $G$ is locally connected. Thus, there exist two vertices $v, w$ of $N(u)$ such that $(v, w)$ is an edge of $G$. Then, $u\rightarrow v\rightarrow w$ is a cycle of $G$. Thus, we have the following corollary.

\begin{cor}\label{CyleContaining}
If $G$ is a $2$-connected, linear-convex supergrid graph, then $G$ contains at least one cycle.
\end{cor}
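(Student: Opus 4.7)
The plan is to chain the $2$-connectedness hypothesis together with Theorem \ref{LocalConnectivity} and then exhibit an explicit triangle. First I would pick an arbitrary vertex $u$ of $G$. Since $G$ is $2$-connected, every vertex of $G$ has degree at least $2$, so in particular $|N(u)| \geqslant 2$.

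Next, I would invoke Theorem \ref{LocalConnectivity}: because $G$ is a $2$-connected, linear-convex supergrid graph, it is locally connected, meaning that the induced subgraph $G[N(u)]$ is connected. A connected graph on at least two vertices must contain at least one edge, so there exist $v, w \in N(u)$ with $v \thicksim w$. Together with the edges $(u,v)$ and $(u,w)$ supplied by $v, w \in N(u)$, these three vertices form a triangle $u \rightarrow v \rightarrow w$ of length $3$ in $G$, which is the required cycle.

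The main obstacle, if any, is merely noticing that the degree bound coming from $2$-connectedness is exactly what is needed to turn local connectedness at a single vertex into an honest $3$-cycle through that vertex; all of the real work has already been packaged inside Theorem \ref{LocalConnectivity}, so the corollary collapses to this short observation.
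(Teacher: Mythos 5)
Your proof matches the paper's own argument essentially verbatim: take any vertex $u$, use $2$-connectedness to get $deg(u)\geqslant 2$, apply Theorem \ref{LocalConnectivity} to get an edge $(v,w)$ inside $G[N(u)]$, and form the triangle $u\rightarrow v\rightarrow w$. Correct, and the same approach.
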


\section{The Hamiltonian cycle problem on linear-convex supergrid graphs}\label{HC-LinearConvexSupergrid}

In this section, we consider connected graphs on $n\geqslant 3$ vertices. We will prove that every $2$-connected, linear-convex supergrid graph is cycle extendable. The result implies that every $2$-connected, linear-convex supergrid graph contains a Hamiltonian cycle. The main result is shown as follows.

\begin{thm}\label{CycleExtendable}
Let $G$ be a $2$-connected, linear-convex supergrid graph. Then, $G$ is cycle extendable.
\end{thm}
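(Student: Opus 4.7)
The plan is to prove cycle extendability by a direct insertion argument: given any non-Hamiltonian cycle $C$ in $G$, exhibit a new vertex $x \notin V(C)$ and a modification of $C$ that incorporates $x$ and leaves the rest unchanged. Corollary \ref{CyleContaining} supplies the base cycle, and Hamiltonicity then follows from the definition of cycle extendable, so the whole content lies in the extension step.

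For the extension step, since $G$ is connected and $V(C) \subsetneq V(G)$, there is a vertex $x \notin V(C)$ with some neighbor $u \in V(C)$. Write $u^{-}$ and $u^{+}$ for the two neighbors of $u$ along $C$. The easy case is when $x \sim u^{-}$ or $x \sim u^{+}$: then the edge $(u^{-},u)$ or $(u,u^{+})$ on $C$ together with $x$ satisfies the hypothesis of Proposition \ref{CycleVertexMerge_Obs}, and $x$ is inserted directly. So suppose $x$ is adjacent to neither of $u^{-}, u^{+}$. By Theorem \ref{LocalConnectivity} the induced subgraph $G[N(u)]$ is connected, so there is a path $Q$ in $G[N(u)]$ from $x$ to $u^{+}$ (and another from $x$ to $u^{-}$). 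Traversing $Q$ from $x$, let $(y,z)$ be the first edge with $y \notin V(C)$ and $z \in V(C)$; then $y$ and $z$ are both in $N(u)$, and $y \sim z$. If $z \in \{u^{-},u^{+}\}$ we are done, since then $\{u^{-},u,y,x\}$ or $\{u,u^{+},y,x\}$ provides room to splice a short $(u^{-},u^{+})$-detour using one or two vertices from $\{x,y\}$ via Proposition \ref{CyclePathMerge_Obs}. The genuinely hard case is $z \in V(C) \setminus \{u^{-},u^{+}\}$.

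To handle that hard case, the plan is to use linear convexity to constrain the geometry tightly and then to perform a case analysis on the position of $u$, $z$, and the detour vertices inside the $3 \times 3$ block around $u$. Concretely, the configurations that can arise are classified by the degree of $u$ (from $2$ up to $8$) and by which of the eight compass positions are occupied by $x, y, z, u^{\pm}$, mirroring the case breakdown already used for Theorem \ref{LocalConnectivity}. In each configuration, Proposition \ref{ImmediateVertexInclude} forces extra grid points to belong to $G$ (the midpoint of any two opposing corner-neighbors of $u$), and these forced vertices are exactly what is needed to either (i) produce an edge of $C$ whose two endpoints are both adjacent to $x$, so that Proposition \ref{CycleVertexMerge_Obs} applies; or (ii) produce two vertices of $C$ joined on $C$ by a single edge which are the endpoints of a two-vertex path through $\{x,y\}$, so that Proposition \ref{CyclePathMerge_Obs} applies; or (iii) split $C$ into two subcycles that can be reassembled with $x$ inserted by Proposition \ref{CycleCycleMerge_Obs} or Proposition \ref{CycleCycleVertexMerge_Obs}.

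The main obstacle I expect, as with the proof of Theorem \ref{LocalConnectivity}, is simply the combinatorial bulk of the case analysis: the insertion works for geometric reasons in every configuration, but each configuration must be checked individually and linear convexity must be invoked in the right place to supply the missing midpoint. A subtler difficulty is ruling out the situation in which $C$ wraps around $u$ so tightly that every candidate insertion point is already on $C$; here 2-connectedness is essential, because it guarantees a second $x$-to-$C$ attachment disjoint from $u$, which can be rotated around to serve as the insertion anchor. Once the extension step is verified in every case, the theorem, and with it the Hamiltonicity of $2$-connected linear-convex supergrid graphs, is immediate.
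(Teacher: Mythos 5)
Your overall framing is sound and matches the paper's in spirit: find a vertex $x\notin V(C)$ attached to $C$ at $u$, dispose of the easy case $x\sim u^{-}$ or $x\sim u^{+}$ by Proposition \ref{CycleVertexMerge_Obs}, and then fight the remaining configurations with linear convexity and Proposition \ref{ImmediateVertexInclude}. Your route to the anchor is slightly different (you invoke Theorem \ref{LocalConnectivity} to walk through $G[N(u)]$, whereas the paper argues directly that $Z=\{L(u_1),R(u_1)\}-\{u_2,u_k\}$ meets $V(G)$ via a line-intersection contradiction, and then forces $z\in V(C)$ because $z$ is adjacent to both endpoints of the cycle edge $(u_1,u_2)$). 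That substitution is legitimate. One small slip: splicing ``two vertices from $\{x,y\}$'' at once does not certify extendability, which demands $|V(C')|=|V(C)|+1$; fortunately in that subcase the single vertex $y$ already inserts by Proposition \ref{CycleVertexMerge_Obs}, so the point is recoverable.

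The genuine gap is the hard case, which is essentially the entire content of the theorem and which you only describe rather than prove. Your claim that the vertices forced by Proposition \ref{ImmediateVertexInclude} always enable one of the three local operations (i)--(iii) is not correct as stated: in several configurations no edge of $C$ near $u$ has both endpoints adjacent to the new vertex, no cycle edge spans a short detour, and $C$ does not decompose into two subcycles meeting the hypotheses of Propositions \ref{CycleCycleMerge_Obs} or \ref{CycleCycleVertexMerge_Obs}. The paper instead must introduce a second anchor (e.g.\ $y=D(z)$ or $D(u_1)$), locate its cycle-neighbors $u_{t-1},u_{t+1}$, and perform global reroutings that reverse whole subpaths of $C$ and reconnect them at new attachment points such as the edge $(u_2,u_k)$ (see the constructions in Claims 1 and 2, Cases 1.1--2.2). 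These reroutings also force the analysis outside the $3\times 3$ block around $u$, since $u_{t\pm 1}$ lie at distance $2$ from $u$. Without exhibiting an explicit extension in each such configuration, the assertion that ``the insertion works for geometric reasons in every configuration'' is exactly the statement to be proved, so the argument as written does not close.
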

\begin{proof}
By Corollary \ref{CyleContaining}, $G$ contains at least one cycle. Since $G$ is 2-connected, the degree of every vertex in $G$ is at least 2. We will prove this theorem by contradiction. Assume by contradiction that $G$ is not cycle extendable. Then, there exists a non-extendable, non-Hamiltonian cycle $C$ of $G$. Let $C = u_1\rightarrow u_2\rightarrow u_3\rightarrow \cdots \rightarrow u_{k-1}\rightarrow u_k$ with $k<|V(G)|$, where $u_1\thicksim u_k$. Since $G$ is connected, there exists a vertex $x$ not in $C$ which is adjacent to one vertex lying on $C$. Without loss of generality, let $u_1$ be such a vertex on $C$ with $u_1 \thicksim x$. Then, $3\leqslant deg(u_1)\leqslant 8$. By Proposition \ref{CycleVertexMerge_Obs}, if $x\thicksim u_2$ or $x\thicksim u_k$ then $C$ can be extended to cover $x$, a contradiction. Thus, $x\nsim u_2$ and $x\nsim u_k$. Depending on whether $u_2 \thicksim u_k$ and the symmetry, we need only consider the cases, as shown in Fig. \ref{Fig_RelatedLocation_x-u1}, for the related positions among $u_1, u_2, u_k, x$.

\begin{figure}[t]
\begin{center}
\includegraphics[scale=1.0]{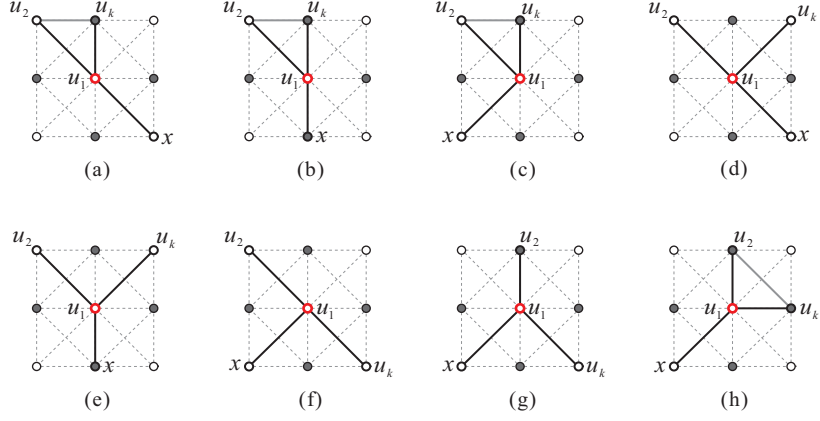}
\caption{The possible relative locations among $u_1, u_2, u_k, x$ when $x\sim u_1$, $x\nsim u_2$, and $x\nsim u_k$.} \label{Fig_RelatedLocation_x-u1}
\end{center}
\end{figure}

Let $Z=\{L(u_1), R(u_1)\}-\{u_2, u_k\}$. We can select set $Z$ to satisfy that $Z\cap V(G)\neq\emptyset$ as follows. Consider the cases of Fig. \ref{Fig_RelatedLocation_x-u1}(a)--(g). Suppose that $Z\cap V(G)=\emptyset$. Since $G$ is $2$-connected, there exists a $(u_2, x)$-path $P$ in $G$ with internal vertices different from $u_1$. Let $l$ be a line which contains edge $(L(u_1), u_1)$ of $S^\infty$. Then, the intersection of $l$ and $G$ contains vertex $u_1$ as an isolated vertex and at least one vertex of $P$. It contradicts that $G$ is linearly convex. Thus, $Z\cap V(G)\neq\emptyset$. On the other hand, consider the case of Fig. \ref{Fig_RelatedLocation_x-u1}(h). By similar arguments in proving Case 3 (Fig. \ref{Fig_LocalConnectivity-deg_4}(f)) of Theorem \ref{LocalConnectivity}, we can verify that $\{L(u_1), D(u_1)\}\cap V(G)\neq\emptyset$. By symmetry, we can let $L(u_1)\in V(G)$. Thus, we can select $Z$ with $Z\cap V(G)\neq\emptyset$. We then  select a vertex $z$ of $Z$ to satisfy the following condition:\\

\noindent(C1) if one of $Z\cap V(G)$ is adjacent to vertex $x$, then let $z$ be the vertex with $z \thicksim x$; otherwise, $z\in Z\cap V(G)$ with $z \nsim x$.\\

The above condition implies that if $z \nsim x$ then none of $Z\cap V(G)$ is adjacent to $x$. By inspecting all cases of Fig. \ref{Fig_RelatedLocation_x-u1}, $z$ is adjacent to at least one of $u_2$ and $u_k$. Since $(u_1, u_2)\in C$, $(u_1, u_k)\in C$, $z\thicksim u_1$, and ($z\thicksim u_2$ or $z\thicksim u_k$), by Proposition \ref{CycleVertexMerge_Obs} $z\in C$. If $z \thicksim u_k$ and $z \nsim u_2$, then we can exchange the positions of $u_2$ and $u_k$ because of the symmetric structure of a cycle. Thus, we may assume that $z \thicksim u_2$ below. We proceed via the following two statements toward a final contradiction.\\
1. If $z \thicksim x$, then $C$ can be extended to cover $x$, i.e., $C$ and $x$ can be merged into a cycle.\\
2. If $z \nsim x$, then $C$ can be extended to cover $x$.

Summarizing the above statements, we conclude that $C$ is cycle extendable. It contradicts that $C$ is a non-extendable, non-Hamiltonian cycle of $G$. This completes the proof of the theorem.
\end{proof}

The proofs of the above two statements are given below as Claims 1--2. We will prove these two claims by constructing a cycle $C'$ extended from $C$ to cover vertex $x$.\\

\noindent \textbf{Claim 1.} If $z \thicksim x$, then $C$ can be extended to cover $x$.\\

\noindent \textbf{Proof.} By inspecting the cases in Fig. \ref{Fig_RelatedLocation_x-u1}, the possible locations among $u_1, u_2, u_k, x$, and $z$ under $z \thicksim x$ are shown in Fig. \ref{Fig_RelatedLocation_x-u1-z-Claim1}. Note that $z \thicksim u_2$, $z \thicksim x$, and we can exchange the locations of $u_2$ and $u_k$ by symmetry. Since $(u_1, u_2)\in C$, $z \thicksim u_1$, and $z \thicksim u_2$, we get $z\in C$ by Proposition \ref{CycleVertexMerge_Obs}. Let $z = u_j$. Then, $2 < j < k$. If $x \thicksim u_{j-1}$ or $x \thicksim u_{j+1}$, then by Proposition \ref{CycleVertexMerge_Obs}, $C$ can be extended to cover $x$ since $x \thicksim z$. In the following, assume that $x \nsim u_{j-1}$ and $x \nsim u_{j+1}$. Consider that $u_{j-1} \thicksim u_{j+1}$. Let $P_1 = u_2\rightarrow u_3\rightarrow \cdots \rightarrow u_{j-1}$ and let $P_2 = u_{j+1}\rightarrow u_{j+2}\rightarrow \cdots \rightarrow u_k$. Then, $u_1\rightarrow x\rightarrow u_j (=z) \rightarrow P_1 \rightarrow P_2$ is a cycle of $G[V(C)\cup \{x\}]$, a contradiction. In the following, assume that $u_{j-1} \nsim u_{j+1}$. Then, the following condition is satisfied:\\

\noindent(C2) $z(=u_j) \thicksim u_2$, $z \thicksim x$, $x \nsim u_{j-1}$, $x \nsim u_{j+1}$, and $u_{j-1} \nsim u_{j+1}$.\\

\begin{figure}[t]
\begin{center}
\includegraphics[scale=1.0]{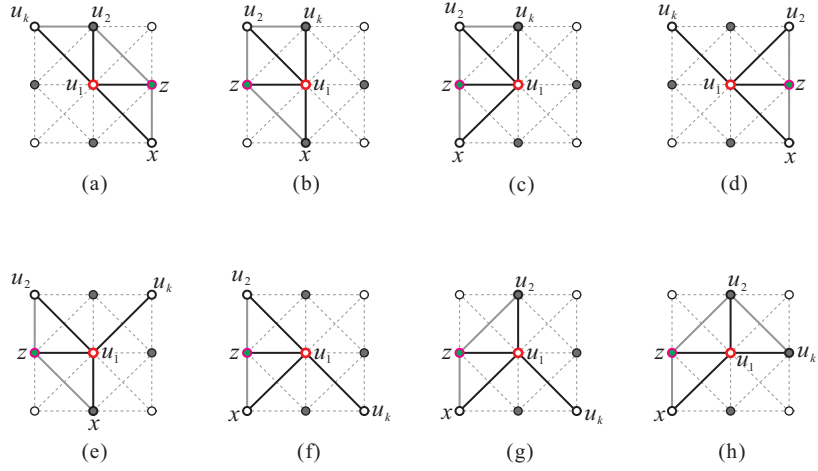}
\caption{The possible relative locations among $u_1, u_2, u_k, x$, and $z$ under that $z\thicksim x$ and $z\thicksim u_2$, where $z\in\{L(u_1), R(u_1)\}-\{u_2, u_k\}$.} \label{Fig_RelatedLocation_x-u1-z-Claim1}
\end{center}
\end{figure}

We first consider the case of $j=3$. Then, $u_2 = u_{j-1}$ and hence $(u_2, u_3=z)$ is an edge in $C$. Suppose that $u_2 \thicksim u_k$. Let $C_1 = u_1\rightarrow x \rightarrow u_3(=z)$ and let $C_2 = u_2\rightarrow u_3\rightarrow \cdots \rightarrow u_{k-1}\rightarrow u_k$. Then, $V(C_1)\cap V(C_2) = \{u_3=z\}$. Since $u_1 \thicksim u_2$, $C_1$ and $C_2$ can be merged into a cycle by Proposition \ref{CycleCycleVertexMerge_Obs}. On the other hand, suppose that $u_2 \nsim u_k$. By Proposition \ref{CycleVertexMerge_Obs}, $x \nsim u_4$ since $x \sim u_3 = z$ and $(u_3, u_4)\in C$. Then, $u_4\in N(z)$, $u_2 \nsim u_4$ ($u_{j-1} \nsim u_{j+1}$), $u_2 \nsim u_k$, and $x \nsim u_4$. That is, the possible cases must satisfy that $N(z)-(N(x)\cup N(u_2))\neq\emptyset$. By examining every case in Fig. \ref{Fig_RelatedLocation_x-u1-z-Claim1}, the possible cases (Fig. \ref{Fig_RelatedLocation_x-u1-z-Claim1}(e) and (g)) are depicted in Fig. \ref{Fig_Claim1_j3-Cases} under that condition (C2) is satisfied, $u_3=z$, and $u_2\nsim u_k$. Consider the case of Fig. \ref{Fig_Claim1_j3-Cases}(a). Then, $N(z)-(N(x)\cup N(u_2))=\{DL(z)\}$. Thus, $u_4 = DL(z)$. Since $DL(z)$ and $DR(z)(=x)$ are in $G$, we obtain that $D(z)$ is in $G$ by Proposition \ref{ImmediateVertexInclude}. Since $D(z) \thicksim z(=u_3)$, $D(z) \thicksim DL(z)=u_4$, and $(z, DL(z))\in C$, we get that $D(z)\in C$ by Proposition \ref{CycleVertexMerge_Obs}. Let $y=D(z)=u_t$. Then, $4 < t \leqslant k-1$. Clearly, $x \thicksim z$, $y \thicksim z$, and $y \thicksim x$. If $u_{t-1} \thicksim u_{t+1}$, then $u_1\rightarrow x\rightarrow y(=u_t)\rightarrow z(=u_3)\rightarrow u_4\rightarrow u_5\rightarrow \cdots \rightarrow u_{t-1}\rightarrow u_{t+1}\rightarrow u_{t+2}\rightarrow \cdots \rightarrow u_{k-1}\rightarrow u_k$ forms a cycle containing $V(C)$ and $x$. Suppose that $u_{t-1} \nsim u_{t+1}$. If $u_{t-1} \thicksim x$ or $u_{t+1} \thicksim x$, then $x$ can be merged into $C$ by Proposition \ref{CycleVertexMerge_Obs}. Thus, $u_{t-1} \nsim x$ and $u_{t+1} \nsim x$. Then, $\{u_{t-1}, u_{t+1}\}=\{UL(y), DL(y)\}$. Let $u_{t-1}=UL(y)$ and $u_{t+1}=DL(y)$. Then, $C= u_1\rightarrow u_2\rightarrow u_3(=z)\rightarrow u_4(=DL(z))\rightarrow P_1 \rightarrow u_{t-1}\rightarrow y(=u_t)\rightarrow u_{t+1}\rightarrow P_2$, where $end(P_2)=u_k$. Let $C' = u_1\rightarrow u_2\rightarrow u_{t-1}\rightarrow \textrm{rev}(P_1)\rightarrow u_4 \rightarrow z\rightarrow x\rightarrow y\rightarrow u_{t+1} \rightarrow P_2$. Then, $C'$ is a cycle that contains $V(C)$ and $x$. The construction of such a cycle $C'$ is depicted in Fig. \ref{Fig_Claim1_j3-Construction}(a). When $u_{t-1}=DL(y)$ and $u_{t+1}=UL(y)$, the extended cycle from $C$ and $x$ is constructed, as shown in Fig. \ref{Fig_Claim1_j3-Construction}(b). By the similar construction, we can construct an extended cycle $C'$ for the case of Fig. \ref{Fig_Claim1_j3-Cases}(b), as shown in Fig. \ref{Fig_Claim1_j3-Construction}(c)--(d).

\begin{figure}[t]
\begin{center}
\includegraphics[scale=1.0]{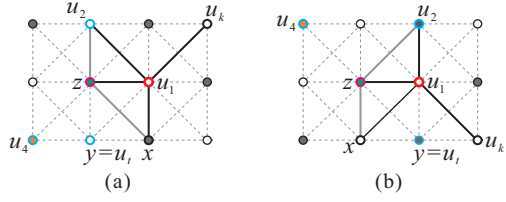}
\caption{The possible cases (Fig. \ref{Fig_RelatedLocation_x-u1-z-Claim1}(e) and (g)) under that condition (C2) is satisfied, $z=u_3$, and $u_2\nsim u_k$.} \label{Fig_Claim1_j3-Cases}
\end{center}
\end{figure}

\begin{figure}[t]
\begin{center}
\includegraphics[scale=1.0]{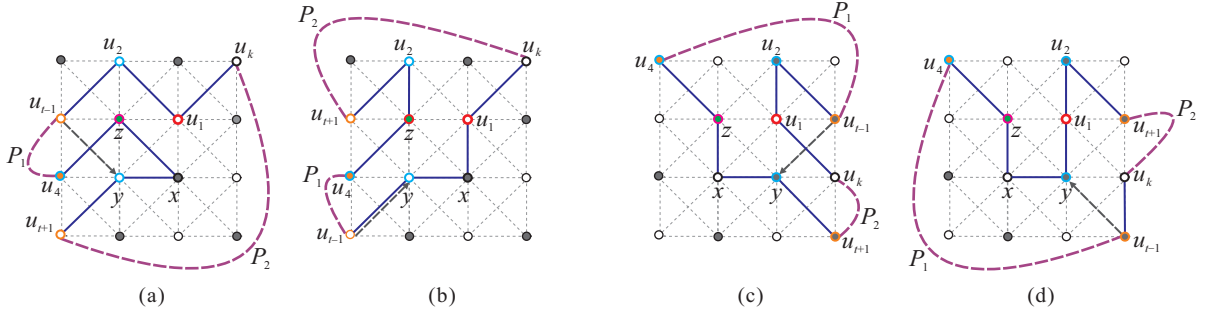}
\caption{(a)--(b) The constructions of extended cycles $C'$ for the case of Fig. \ref{Fig_Claim1_j3-Cases}(a); and (c)--(d) the constructions of extended cycles $C'$ for the case of Fig. \ref{Fig_Claim1_j3-Cases}(b), where solid edges indicate the edges in the cycles, bold dashed lines represent the subpaths $P_1$ and $P_2$ of cycle $C$, and dashed arrow line indicates edge $(u_{t-1}, u_t(=y))$ in $C$.} \label{Fig_Claim1_j3-Construction}
\end{center}
\end{figure}

In the following, we will consider the case of $j>3$. Then, $u_2 \neq u_{j-1}$. Note that $z=u_j$. By condition (C2), $x \nsim u_{j-1}$, $x \nsim u_{j+1}$, and $u_{j-1} \nsim u_{j+1}$. 
By inspecting all cases in Fig. \ref{Fig_RelatedLocation_x-u1-z-Claim1}, the cases of Fig. \ref{Fig_RelatedLocation_x-u1-z-Claim1}(a)(c)(g)(h) do not satisfy condition (C2). Thus, the possible cases under condition (C2) and $j>3$ are depicted in Fig. \ref{Fig_Claim1_jx3-Cases}. Consider the cases of Fig. \ref{Fig_Claim1_jx3-Cases}(b) and (d). We can see that $\{u_{j-1}, u_{j+1}\}=\{UL(z), UR(z)\}$. Let $C = u_1\rightarrow u_2\rightarrow P_1 \rightarrow z \rightarrow P_2 \rightarrow u_k$, where $end(P_1)=u_{j-1}$ and $start(P_2)=u_{j+1}$. Let $C' = u_1\rightarrow x\rightarrow z\rightarrow \textrm{rev}(P_1)\rightarrow u_2\rightarrow P_2\rightarrow u_k$. Then, $C'$ is a Hamiltonian cycle of $G[V(C)\cup\{x\}]$ and is extended from $C$ by adding $x$. Now, consider the case of Fig. \ref{Fig_Claim1_jx3-Cases}(a). We can see that $\{u_{j-1}, u_{j+1}\} = \{UL(z), DL(z)\}$. Since $UL(z)$, $DL(z)$, and $DR(z)(=x)$ are in $G$, both $L(z)$ and $D(z)$ are in $G$ by Proposition \ref{ImmediateVertexInclude}. Let $y=D(z)$. Then, $y \thicksim x$. Since $y\thicksim z$, $y\thicksim DL(z)$, and $(z, DL(z))\in C$, we get that $y\in C$ by Proposition \ref{CycleVertexMerge_Obs}. Let $y=u_t$. If $x\thicksim u_{t-1}$ or $x\thicksim u_{t+1}$, then a Hamiltonian cycle of $G[V(C)\cup\{x\}]$ can be constructed by Proposition \ref{CycleVertexMerge_Obs}. Thus, assume that $x\nsim u_{t-1}$ and  $x\nsim u_{t+1}$ below. Then, $L(y)(=DL(z))\in \{u_{j-1}, u_{j+1}\}$ and $u_{t-1}, u_{t+1}\in\{UL(y), L(y), DL(y)\}$. Thus, $1\geqslant |\{u_{j-1}, u_{j+1}\} \cap \{u_{t-1}, u_{t+1}\}| \geqslant 0$. There are two cases:

\begin{figure}[t]
\begin{center}
\includegraphics[scale=1.0]{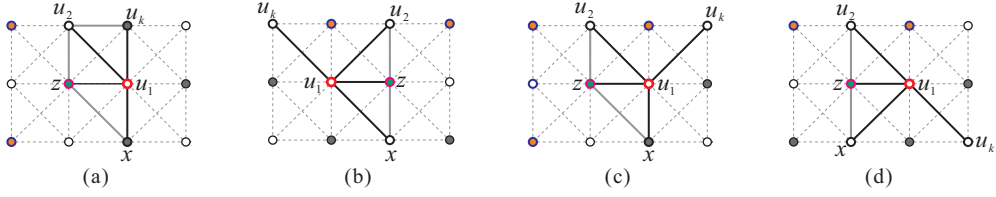}
\caption{The possible cases under condition (C2) and $j>3$.} \label{Fig_Claim1_jx3-Cases}
\end{center}
\end{figure}

\textit{Case} 1: $|\{u_{j-1}, u_{j+1}\} \cap \{u_{t-1}, u_{t+1}\}| = 1$. Then, $\{u_{j-1}, u_{j+1}\} \cap \{u_{t-1}, u_{t+1}\} = \{L(y)\}$. Depending on whether $y$ appears before $z$ in $C$, we consider the following two subcases:

\hspace{0.5cm}\textit{Case} 1.1: $2 \leqslant t\leqslant j-2$. In this subcase, $y(=u_t)$ appears before $z(=u_j)$ in $C$. Then, $u_{t+1}=u_{j-1}=L(y)=DL(z)$. Let $C = u_1\rightarrow u_2\rightarrow P_1\rightarrow y \rightarrow u_{t+1}\rightarrow z \rightarrow P_2\rightarrow u_k$, where $end(P_1)=u_{t-1}$ and $start(P_2)=u_{j+1}=UL(z)$. Let $C' = u_1\rightarrow x\rightarrow y\rightarrow u_{t+1}\rightarrow \textrm{rev}(P_1)\rightarrow u_2\rightarrow u_k\rightarrow \textrm{rev}(P_2)\rightarrow z$. Then, $C'$ is a Hamiltonian cycle of $G[V(C)\cup\{x\}]$. Fig. \ref{Fig_Claim1_jx3-Case1}(a) (resp., Fig. \ref{Fig_Claim1_jx3-Case1}(b)) shows the cycle $C'$ when $u_{t-1}=UL(y)$ (resp., $u_{t-1}=DL(y)$).

\begin{figure}[tp]
\begin{center}
\includegraphics[scale=1.0]{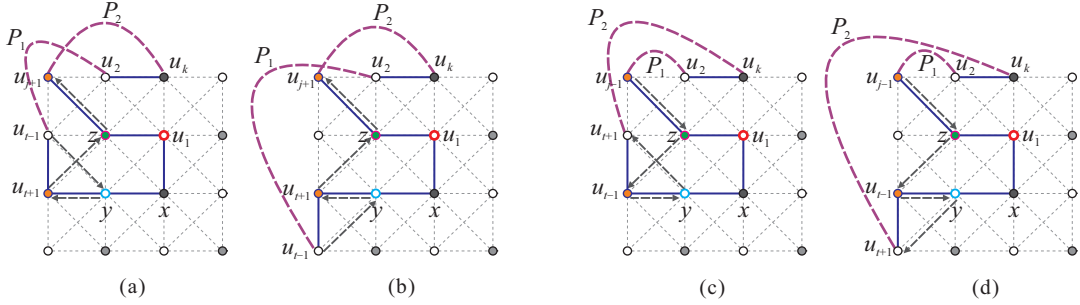}
\caption{The construction of extended cycle $C'$ for (a)--(b) Case 1.1 of Claim 1 under the case of Fig. \ref{Fig_Claim1_jx3-Cases}(a), and (c)--(d) Case 1.2 of Claim 1 under the case of Fig. \ref{Fig_Claim1_jx3-Cases}(a), where solid lines indicate the edges in $C'$, bold dashed lines represent the subpaths of cycle $C$, and dashed arrow lines indicates edges in $C$.} \label{Fig_Claim1_jx3-Case1}
\end{center}
\end{figure}

\hspace{0.5cm}\textit{Case} 1.2: $j+1 \leqslant t\leqslant k-1$. In this subcase, $y(=u_t)$ appears after $z(=u_j)$ in $C$. Then, $u_{j+1}=u_{t-1}=L(y)=DL(z)$. By similar arguments in proving Case 1.1, we can construct an extended cycle $C'$ from $C$ by adding $x$, as shown in Fig. \ref{Fig_Claim1_jx3-Case1}(c)--(d).

\textit{Case} 2: $|\{u_{j-1}, u_{j+1}\} \cap \{u_{t-1}, u_{t+1}\}| = 0$. In this case, $\{u_{j-1}, u_{j+1}\} = \{UL(z), DL(z)\}$ and $\{u_{t-1}, u_{t+1}\} = \{UL(y), DL(y)\}$. There are two subcases:

\hspace{0.5cm}\textit{Case} 2.1: $2 \leqslant t\leqslant j-2$. Since $\{u_{j-1}, u_{j+1}\} \cap \{u_{t-1}, u_{t+1}\} = \emptyset$, $t \neq j-2$. The possible cycles of $C$ are shown in Fig. \ref{Fig_Claim1_jx3-Case21}. Our strategy for constructing a cycle $C'$ combining $C$ with $x$ is sketched as follows. Let $P_1, P_2, P_3$ be three subpaths of $C$ for connecting $u_2, u_{j-1}, u_{j+1}, u_{t-1}, u_{t+1}, u_k$. We first construct a cycle $C_1$ consisting of $P_1, P_2, P_3, z$, and $y$ such that there exists one edge in $C_1$ that is incident to edge $(u_1, x)$. By Proposition \ref{CyclePathMerge_Obs}, $C_1$ and $(u_1, x)$ can be concatenated into a cycle $C'$. Then, $C'$ is the desired cycle constructed from  $C$ to cover $x$. Fig. \ref{Fig_Claim1_jx3-Case21-Construction}(a)--(d) show the constructed cycles $C'$ for Fig. \ref{Fig_Claim1_jx3-Case21}(a)--(d), respectively.

\begin{figure}[t]
\begin{center}
\includegraphics[scale=1.0]{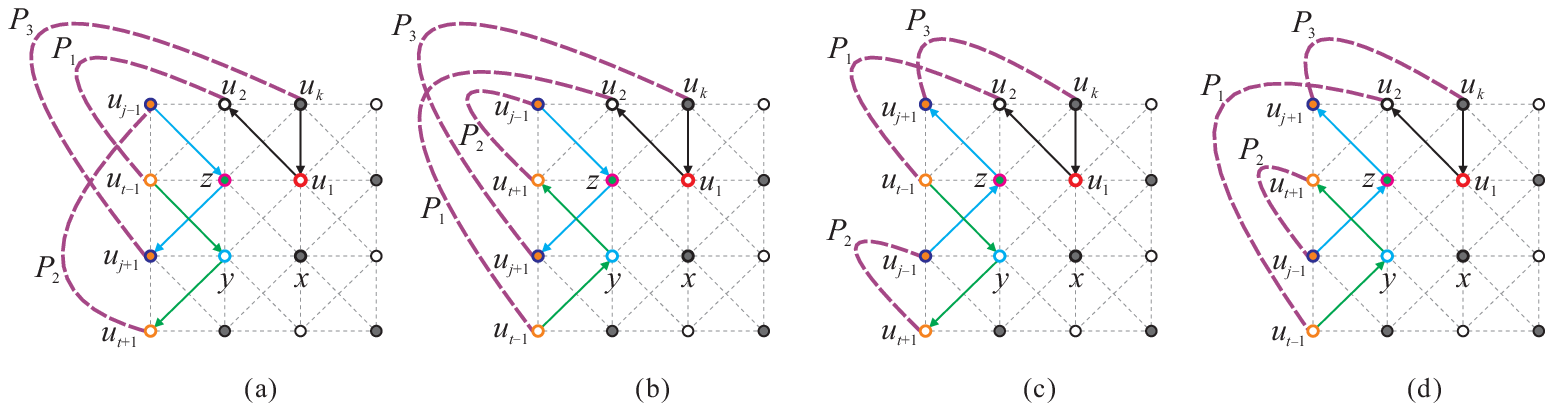}
\caption{The possible cycles $C$ for Case 2.1 of Claim 1 under the case of Fig. \ref{Fig_Claim1_jx3-Cases}(a), where $y$ appears before $z$ in $C$, arrow lines indicate the edges in $C$, and bold dashed lines represent the subpaths of cycle $C$.} \label{Fig_Claim1_jx3-Case21}
\end{center}
\end{figure}

\begin{figure}[tp]
\begin{center}
\includegraphics[scale=1.0]{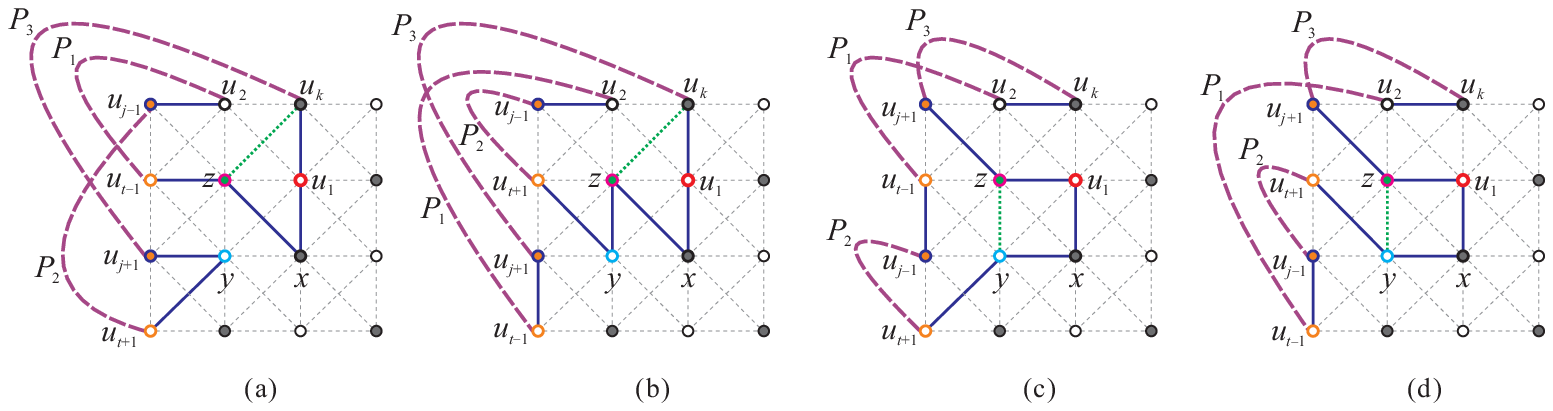}
\caption{The constructed extending cycles for Fig. \ref{Fig_Claim1_jx3-Case21}(a)--(d), respectively, where dashed dark lines represent subpaths of $C$, solid lines indicate the edges in the constructed cycle, and dotted line indicate the edge in the immediate cycle $C_1$ that is incident to edge $(u_1, x)$.} \label{Fig_Claim1_jx3-Case21-Construction}
\end{center}
\end{figure}

\hspace{0.5cm}\textit{Case} 2.2: $j+2 \leqslant t\leqslant k-1$. Since $\{u_{j-1}, u_{j+1}\} \cap \{u_{t-1}, u_{t+1}\} = \emptyset$, $t \neq j+2$. Then, $y$ appears after $z$ in $C$. By similar arguments in proving Case 2.1, we can construct a cycle $C'$ from $V(C)$ and $x$, as shown in Fig. \ref{Fig_Claim1_jx3-Case22-Construction}.

\begin{figure}[tp]
\begin{center}
\includegraphics[scale=1.0]{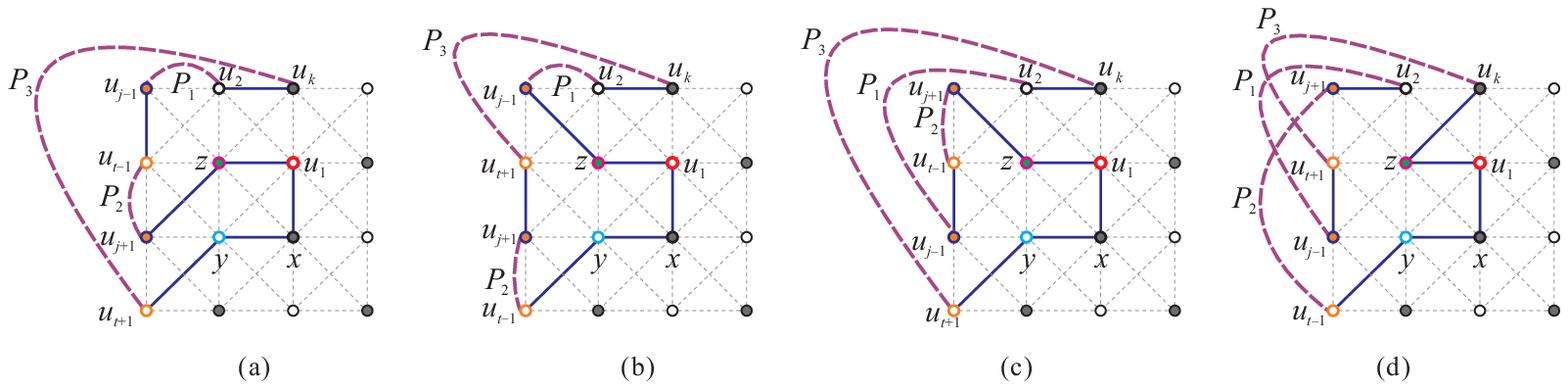}
\caption{The constructed cycles $C'$ for Case 2.2 of Claim 1 under the case of Fig. \ref{Fig_Claim1_jx3-Cases}(a), where $y$ appears after $z$ in $C$, arrow lines indicate the edges in $C$, and bold dashed lines represent the subpaths of cycle $C$.} \label{Fig_Claim1_jx3-Case22-Construction}
\end{center}
\end{figure}

We have proved the cases of Fig. \ref{Fig_Claim1_jx3-Cases}(a), (b), and (d). For the case of Fig. \ref{Fig_Claim1_jx3-Cases}(c), the possible locations of $u_{j-1}$ and $u_{j+1}$ are shown in Fig. \ref{Fig_Claim1_jx3-Case-(c)}. By similar arguments in proving the case of Fig. \ref{Fig_Claim1_jx3-Cases}(b), the extended cycle $C'$ from $C$ and $x$ for the cases of Fig. \ref{Fig_Claim1_jx3-Case-(c)}(a)--(g) can be constructed. Note that Fig. \ref{Fig_Claim1_jx3-Case-(c)}(a)--(e) also show the constructed cycles, where dashed dark lines represent subpaths of $C$ and solid lines indicate the edges in the constructed cycles. In addition, the extended cycle $C'$ for the case of Fig. \ref{Fig_Claim1_jx3-Case-(c)}(h) can be constructed by the similar construction of Fig. \ref{Fig_Claim1_jx3-Cases}(a).

\begin{figure}[tp]
\begin{center}
\includegraphics[scale=1.0]{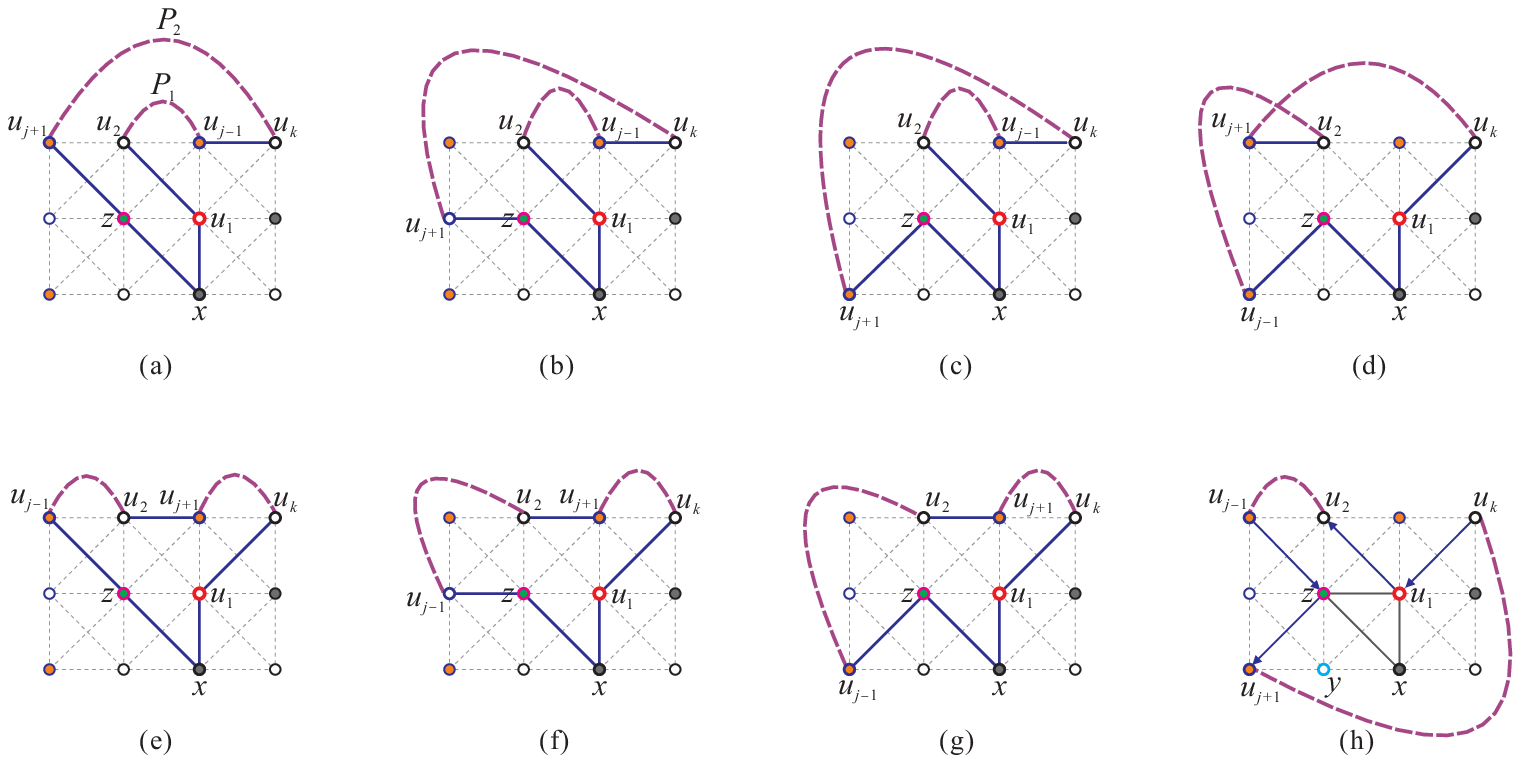}
\caption{The possible locations of $u_{j-1}$ and $u_{j+1}$ for the case of Fig. \ref{Fig_Claim1_jx3-Cases}(c).} \label{Fig_Claim1_jx3-Case-(c)}
\end{center}
\end{figure}

We have considered all cases under $z \thicksim x$ to construct a cycle $C'$ extended from $C$ to cover $x$. This completes the proof of Claim 1.\hfill$\square$\\

\noindent \textbf{Claim 2.} If $z \nsim x$, then $C$ can be extended to cover $x$.\\

\noindent \textbf{Proof.} Recall that $Z=\{L(u_1), R(u_1)\}-\{u_2, u_k\}$, $Z\cap V(G)\neq\emptyset$, and $z\in Z$. We select vertex $z$ to satisfy condition (C1) that if $z \nsim x$ then none of $Z\cap V(G)$ is adjacent to $x$. By inspection on every case of Fig. \ref{Fig_RelatedLocation_x-u1}, the possible relative positions among $z, x, u_1, u_2, u_k$ under $z \nsim x$ are shown in Fig. \ref{Fig_RelatedLocation_Claim2}. For example, in inspecting the case of Fig. \ref{Fig_RelatedLocation_x-u1}(c), by Proposition \ref{ImmediateVertexInclude} $z=L(u_1)\thicksim x$ since $UL(u_1)=u_2$ and $DL(u_1)=x$ are in $V(G)$. On the other hand, for the case of \ref{Fig_RelatedLocation_x-u1}(d) we get that $z=R(u_1)$. Then, the case is the same as the case of exchanging the positions of $u_2$ and $u_k$, and hence $z\thicksim x$. The other cases can be inspected by the symmetry and linearly convexity of graphs.

\begin{figure}[tp]
\begin{center}
\includegraphics[scale=1.0]{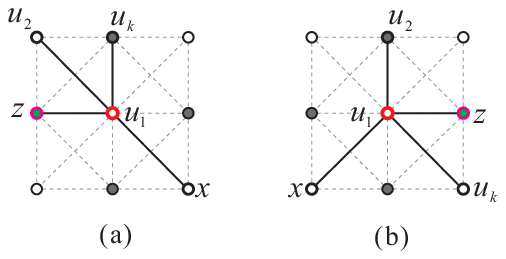}
\caption{The possible locations among $z, x, u_1, u_2, u_k$ for Claim 2 with $z \nsim x$.} \label{Fig_RelatedLocation_Claim2}
\end{center}
\end{figure}

We first consider the case of Fig. \ref{Fig_RelatedLocation_Claim2}(b). Since $DL(u_1)=x$ and $DR(u_1)=u_k$ are in $G$, we get $D(u_1)\in V(G)$ by Proposition \ref{ImmediateVertexInclude}. By symmetry, we can exchange $u_2$ and $u_k$. Replace $z=R(u_1)$ with $z=D(u_1)$. Then, $z \thicksim x$ and it can be proved by the similar arguments in proving Claim 1.

Now, we consider the case of Fig. \ref{Fig_RelatedLocation_Claim2}(a). We will construct a cycle $C'$ by the similar construction of Claim 1. Since $G$ is $2$-connected and linearly convex, at least one of $UR(u_1)$ and $DL(u_1)$ is in $G$. Suppose that $UR(u_1)$ is in $G$. By Proposition \ref{ImmediateVertexInclude}, $R(u_1)\in V(G)$. By symmetry, we can exchange the locations of $u_2$ and $u_k$. Let $z = R(u_1)$. Then, $z\thicksim x$, $z\thicksim u_2$, and it is the case of Fig. \ref{Fig_RelatedLocation_x-u1-z-Claim1}(a). Thus, an extended cycle $C'$ from $C$ and $x$ can be constructed.

In the following, assume that $UR(u_1)\not\in V(G)$. Then, $DL(u_1)\in V(G)$. Since $DL(u_1)\in V(G)$ and $DR(u_1)(=x)\in V(G)$, we get that $D(u_1)\in V(G)$ by Proposition \ref{ImmediateVertexInclude}. Let $y=D(u_1)$. Then, $y \thicksim z$, $y \nsim u_2$, and $y \nsim u_k$. By the case of Fig. \ref{Fig_RelatedLocation_x-u1-z-Claim1}(b) in Claim 1, $y\in C$. Let $y = u_t$. Then, $3\leqslant t\leqslant k-1$. By Proposition \ref{CycleVertexMerge_Obs}, $x \nsim u_{t-1}$ and $x \nsim u_{t+1}$. Then, $u_{t-1}, u_{t+1}\in \{L(y), DL(y), UL(y)(=z)\}$. Since $(u_1, u_2)\in C$, $z \thicksim u_1$, and $z \thicksim u_2$, we get that $z\in C$ by Proposition \ref{CycleVertexMerge_Obs}. Let $z = u_j$. Then, $3\leqslant j\leqslant k-1$. Suppose that $z=u_{t-1}$. Let $C_1 = u_1\rightarrow x\rightarrow u_t(=y)\rightarrow u_{t-1}(=z)\rightarrow u_{t-2}\rightarrow u_{t-3}\rightarrow \cdots \rightarrow u_2$ and let $P_1 = u_{t+1} \rightarrow u_{t+2}\rightarrow \cdots\rightarrow u_{k-1}\rightarrow u_k$. Then, $start(P_1)(=u_{t+1}) \thicksim u_t(=y)$ and $end(P_1)(=u_k) \thicksim u_{t-1}(=z)$. By Proposition \ref{CyclePathMerge_Obs}, $C_1$ and $P_1$ can be concatenated into a cycle $C'$ with $V(C') = V(C)\cup\{x\}$. On the other hand, suppose that $z=u_{t+1}$. Then, $C' = u_1 \rightarrow x \rightarrow u_t(=y) \rightarrow u_{t-1} \rightarrow u_{t-2} \rightarrow \cdots \rightarrow u_3 \rightarrow u_2 \rightarrow u_{t+1}(=z) \rightarrow u_{t+2} \rightarrow \cdots \rightarrow u_{k-1} \rightarrow u_k$ is a cycle covering $V(C)$ and $x$, and hence $C'$ is the extension of $C$. In the following, assume that $z \neq u_{t-1}$ and $z \neq u_{t+1}$. Then, $u_{t-1}, u_{t+1}\in \{L(y), DL(y)\}$ and, hence, $u_{t-1}\thicksim u_{t+1}$. Depending on whether $y$ appears before $z$ in $C$, there are two cases as follows.

\textit{Case} 1: $2 \leqslant j\leqslant t-2$. In this case, $y$ appears after $z$ in $C$. Suppose that $u_{j-1} \thicksim u_{j+1}$. We have that $u_{t-1} \thicksim u_{t+1}$, $z \thicksim u_2$, and $u_t(=y) \thicksim u_j(=z)$. Let $C' = u_1 \rightarrow x \rightarrow u_t(=y) \rightarrow u_j(=z) \rightarrow u_2 \rightarrow u_3 \rightarrow \cdots\rightarrow u_{j-1} \rightarrow u_{j+1} \rightarrow u_{j+2} \rightarrow \cdots \rightarrow u_{t-1} \rightarrow u_{t+1} \rightarrow u_{t+2} \rightarrow \cdots \rightarrow u_{k-1} \rightarrow u_k$. Then, $C'$ is a Hamiltonian cycle of $G[V(C)\cup\{x\}]$. In the following, suppose that $u_{j-1} \nsim u_{j+1}$. Then, $\{u_{j-1}, u_{j+1}\} = \{UL(z), DL(z)\}$ or $\{UL(z), D(z)\}$. We can see that $D(z)\in\{u_{t-1}, u_{t+1}\}$ and $1\geqslant |\{u_{j-1}, u_{j+1}\}\cap\{u_{t-1}, u_{t+1}\}| \geqslant 0$. Consider that $\{u_{j-1}, u_{j+1}\}\cap\{u_{t-1}, u_{t+1}\}=\emptyset$. Then, $\{u_{j-1}, u_{j+1}\} = \{UL(z), DL(z)\}$ and $\{u_{t-1}, u_{t+1}\} = \{L(y), DL(y)\}$. Let $C = u_1 \rightarrow u_2 \rightarrow P_1 \rightarrow z(=u_j) \rightarrow u_{j+1} \rightarrow P_2 \rightarrow y(=u_t) \rightarrow P_3 \rightarrow u_k$, where $end(P_1) = u_{j-1}$, $end(P_2)=u_{t-1}$, and $start(P_3)=u_{t+1}$. Depending on the related positions among $u_{j-1}, u_{j+1}, u_{t-1}, u_{t+1}$, there are four possible locations of vertices and subpaths in $C$, as shown in Fig. \ref{Fig_Claim2_Case1}(a)--(d). For instance, in Fig. \ref{Fig_Claim2_Case1}(a), $u_{j-1}=UL(z)$, $u_{j+1}=DL(z)$, $u_{t-1}=L(y)$, and $u_{t+1}=DL(y)$. For the cases of Fig. \ref{Fig_Claim2_Case1}(a)--(d), we can construct an extending cycle $C'$ from $C$ to cover $x$, as in Fig. \ref{Fig_Claim2_Case1}(e)--(h), respectively. For instance, consider the case of Fig. \ref{Fig_Claim2_Case1}(a), let $C' = u_1 \rightarrow x \rightarrow y \rightarrow P_3 \rightarrow u_k \rightarrow u_2 \rightarrow P_1 \rightarrow z \rightarrow u_{j+1} \rightarrow P_2$, as in Fig. \ref{Fig_Claim2_Case1}(e). Fig. \ref{Fig_Claim2_Case1}(f)--(h) depict the constructed cycles for Fig. \ref{Fig_Claim2_Case1}(b)--(d), respectively. On the other hand, consider that $\{u_{j-1}, u_{j+1}\}\cap\{u_{t-1}, u_{t+1}\}\neq\emptyset$. Then, $\{u_{j-1}, u_{j+1}\}\cap\{u_{t-1}, u_{t+1}\}=D(z)=L(y)$ and an extending cycle $C'$ can be constructed by the similar construction. 

\begin{figure}[tp]
\begin{center}
\includegraphics[scale=1.0]{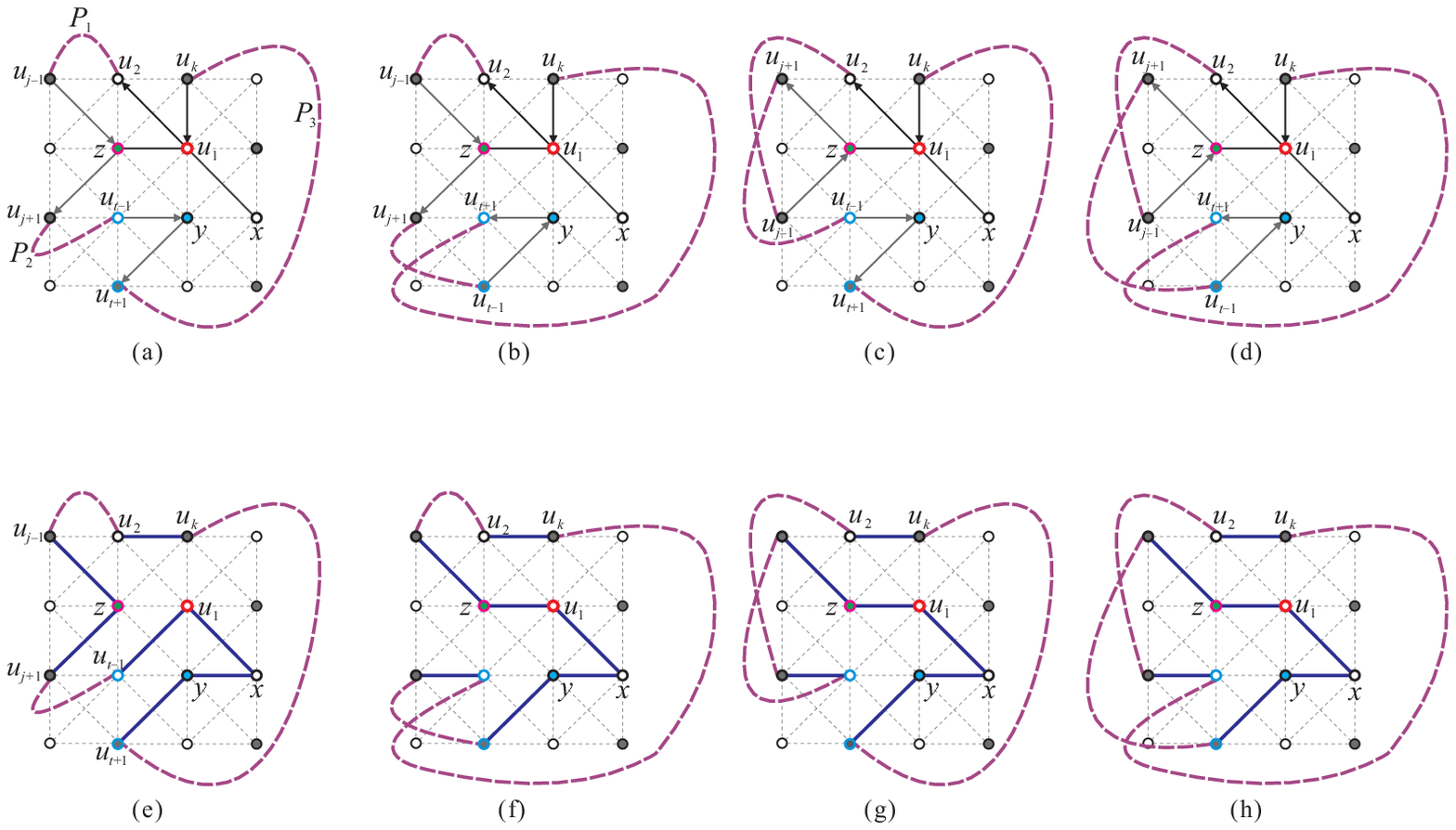}
\caption{(a)--(d) The possible cycle $C$ under that $z \nsim x$, $z\not\in \{u_{t-1}, u_{t+1}\}$, $2 \leqslant j\leqslant t-2$, $u_{j-1} \nsim u_{j+1}$, and $\{u_{j-1}, u_{j+1}\}\cap\{u_{t-1}, u_{t+1}\}=\emptyset$, where arrow solid lines indicate the edges in $C$ and dark dashed lines represent subpaths of $C$; (e)--(h) the constructed extending cycle $C'$ containing $V(C)$ and $x$ for (a)--(d), respectively.} \label{Fig_Claim2_Case1}
\end{center}
\end{figure}

\textit{Case} 2: $t+1 \leqslant j\leqslant k-1$. In this case, $y$ appears before $z$ in $C$. By similar arguments in proving Case 1, this case can be proved. Consider that $u_{j-1} \thicksim u_{j+1}$. Then, $C' = u_1 \rightarrow x \rightarrow u_t(=y) \rightarrow u_j(=z) \rightarrow u_2 \rightarrow u_3 \rightarrow \cdots\rightarrow u_{t-1} \rightarrow u_{t+1} \rightarrow u_{t+2} \rightarrow \cdots \rightarrow u_{j-1} \rightarrow u_{j+1} \rightarrow u_{j+2} \rightarrow \cdots \rightarrow u_{k-1} \rightarrow u_k$ is a cycle extended from $C$ to cover $x$. Suppose that $u_{j-1} \nsim u_{j+1}$ below. By the same arguments in proving Case 1, we first consider that $\{u_{j-1}, u_{j+1}\}\cap\{u_{t-1}, u_{t+1}\}=\emptyset$. The possible cycles $C$ and $x$ are shown in Fig. \ref{Fig_Claim2_Case2}(a)--(d). Fig. \ref{Fig_Claim2_Case2}(e)--(h) depict the constructed cycles for the cases of Fig. \ref{Fig_Claim2_Case2}(a)--(d), respectively. On the other hand, consider that $\{u_{j-1}, u_{j+1}\}\cap\{u_{t-1}, u_{t+1}\}\neq\emptyset$. Then,  $\{u_{j-1}, u_{j+1}\}\cap\{u_{t-1}, u_{t+1}\}=D(z)=L(y)$. In our construction of cycles in Fig. \ref{Fig_Claim2_Case2}(e)--(g), they contain edge $(DL(z), L(y))$. Thus, we can view this edge as one vertex $D(z)$ and construct the extending cycle $C'$ for the case of $\{u_{j-1}, u_{j+1}\}\cap\{u_{t-1}, u_{t+1}\}\neq\emptyset$. By similar technique, it can be applied to the construction of $C'$ for Fig. \ref{Fig_Claim2_Case2}(h) under that $\{u_{j-1}, u_{j+1}\}\cap\{u_{t-1}, u_{t+1}\}=\{D(z)=L(y)\}$.

\begin{figure}[tp]
\begin{center}
\includegraphics[scale=1.0]{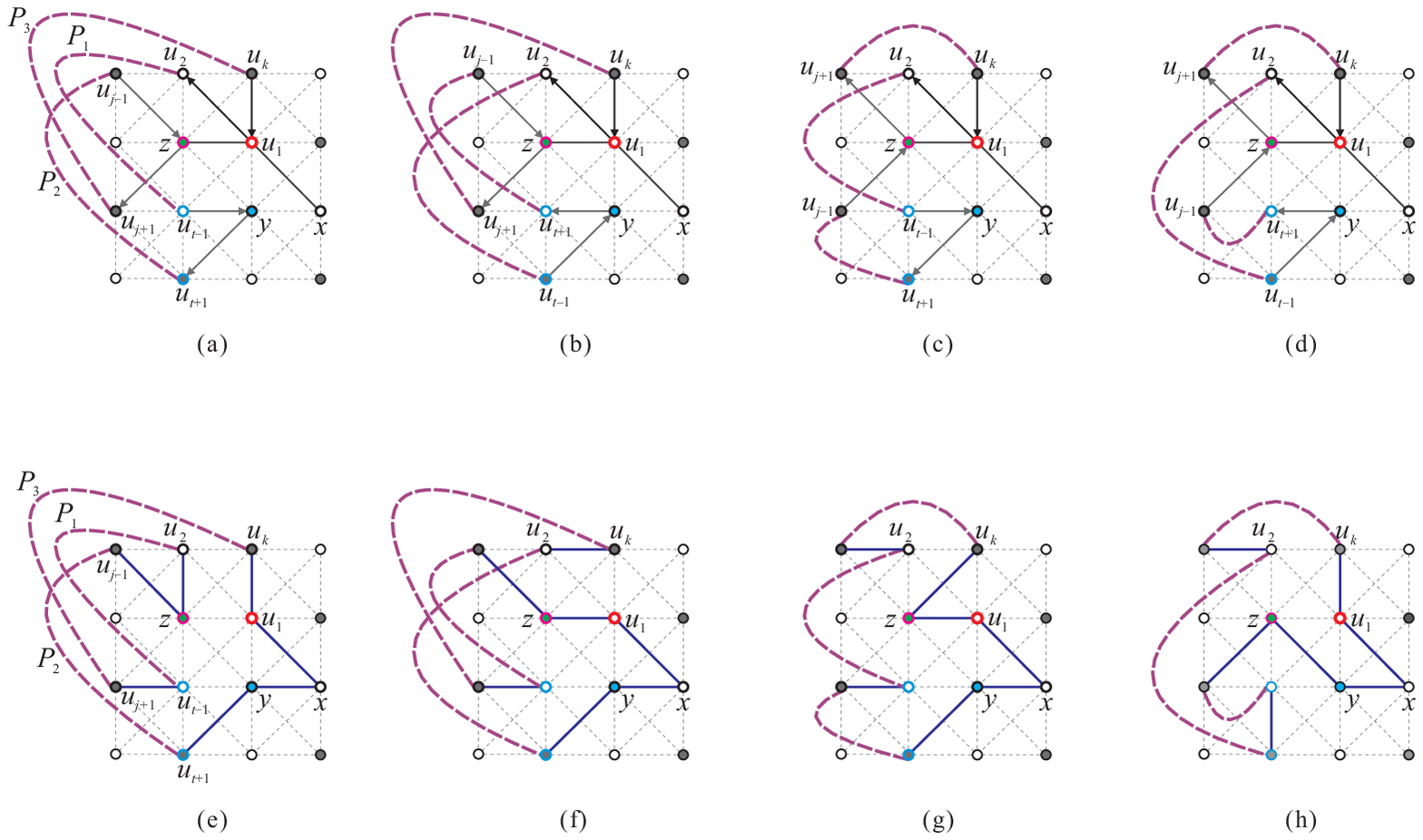}
\caption{(a)--(d) The possible cycle $C$ under that $z \nsim x$, $z\not\in \{u_{t-1}, u_{t+1}\}$, $t+1 \leqslant j\leqslant k-1$, $u_{j-1} \nsim u_{j+1}$, and $\{u_{j-1}, u_{j+1}\}\cap\{u_{t-1}, u_{t+1}\}=\emptyset$, where arrow solid lines indicate the edges in $C$ and dark dashed lines represent subpaths of $C$; (e)--(h) the constructed extending cycle $C'$ containing $V(C)$ and $x$ for (a)--(d), respectively.} \label{Fig_Claim2_Case2}
\end{center}
\end{figure}

We have considered all cases under $z \nsim x$ to show that $C$ is cycle extendable. This completes the proof of Claim 2.\hfill$\square$\\

It immediately follows from Theorem \ref{CycleExtendable} that the following theorem holds true.

\begin{thm}\label{Hamiltonian}
Let $G$ be a $2$-connected, linear-convex supergrid graph. Then, $G$ contains a Hamiltonian cycle.
\end{thm}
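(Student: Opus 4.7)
The plan is to obtain Theorem \ref{Hamiltonian} as an almost immediate consequence of Theorem \ref{CycleExtendable} together with Corollary \ref{CyleContaining}. Since $G$ is $2$-connected and linear-convex, Corollary \ref{CyleContaining} guarantees an initial cycle $C_0$ in $G$; this is the base object I need to seed the extension procedure, and it is precisely where both hypotheses (2-connectedness and linear convexity) are jointly used.

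Next I would iteratively apply Theorem \ref{CycleExtendable}. So long as the current cycle $C_i$ is not Hamiltonian, it qualifies as a non-Hamiltonian cycle of the cycle-extendable graph $G$, so Theorem \ref{CycleExtendable} yields an extension $C_{i+1}$ with $V(C_i)\subset V(C_{i+1})$ and $|V(C_{i+1})|=|V(C_i)|+1$. Because $V(G)$ is finite, this strictly increasing chain of cycle sizes must terminate after at most $|V(G)|-|V(C_0)|$ steps, and termination can only occur at a cycle $C_m$ with $V(C_m)=V(G)$, which is by definition a Hamiltonian cycle of $G$.

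There is no real obstacle in this step: the heavy combinatorial lifting has already been done in Theorem \ref{CycleExtendable} (and in the earlier local-connectivity result Theorem \ref{LocalConnectivity}). The only thing to spell out is the trivial finite induction on the number of vertices not yet covered, together with the observation recalled in Section \ref{Preliminaries} that every cycle-extendable graph containing at least one cycle is Hamiltonian. Thus the entire proof of Theorem \ref{Hamiltonian} should amount to one short paragraph citing Corollary \ref{CyleContaining} and Theorem \ref{CycleExtendable}.
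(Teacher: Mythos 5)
Your proposal is correct and is essentially the paper's own argument: the paper derives Theorem \ref{Hamiltonian} immediately from Theorem \ref{CycleExtendable} (using the fact, recalled in Section \ref{Preliminaries}, that every cycle-extendable graph is Hamiltonian), which is exactly the seed-cycle-plus-finite-extension induction you spell out via Corollary \ref{CyleContaining}.
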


\section{Concluding remarks}\label{Conclusion}

The Hamiltonian cycle and Hamiltonian path problems for supergrid graphs were known to be NP-complete. In this paper, we first show that $2$-connected, linear-convex supergrid graphs are locally connected. We then prove that $2$-connected, linear-convex supergrid graphs are cycle extendable and hence are Hamiltonian. It is interesting to see whether the Hamiltonian problems for the other subclasses of supergrid graphs, including solid and locally connected, are polynomial solvable. We would like to post it as an open problem to interested readers.



\end{document}